%% file: main.tex
\documentclass{article}

\usepackage[affil-it]{authblk}
\usepackage[dvipsnames]{xcolor}
\usepackage{amsfonts}
\usepackage{amsmath,amsthm,amssymb,dsfont}

\usepackage{enumerate}
\usepackage{graphicx}	
\usepackage{subcaption}
\usepackage[margin=3cm]{geometry}
\usepackage{url}
\usepackage{todonotes}
\usepackage{bbm}

\usepackage{tikz}
\usepackage{circuitikz}

\usetikzlibrary{arrows.meta} 
\tikzset{ meter/.append style={ draw, inner sep=10, rectangle, font=\vphantom{A}, minimum width=30, line width=.5, path picture={ \draw[black] ([shift={(.1,.3)}]path picture bounding box.south west) to[bend left=50] ([shift={(-.1,.3)}]path picture bounding box.south east); \draw[black,-{Latex[scale=0.6]}] ([shift={(0,.1)}]path picture bounding box.south) -- ([shift={(.3,-.1)}]path picture bounding box.north); } } }

\usepackage{pifont}
\usepackage{multirow}
\usepackage{makecell}

\usepackage{epsfig}
\usetikzlibrary{shapes.symbols,patterns} % for source symbols
\usepackage{pgfplots}
\pgfplotsset{compat=1.10}
\usepgfplotslibrary{fillbetween}
\usetikzlibrary {decorations.pathmorphing, decorations.pathreplacing, decorations.shapes}

\definecolor{linkblue}{HTML}{001487}
\usepackage{hyperref}
\hypersetup{colorlinks=true,citecolor=linkblue,linkcolor=linkblue,filecolor=linkblue,urlcolor=linkblue,breaklinks=true}

\usepackage{nicefrac}
\usepackage{mathtools}

\usepackage{thmtools} % Load this to fix cleveref issues
\hypersetup{hypertexnames=false}

\usepackage{algorithm}
\usepackage{algorithmic}

\usepackage{mdframed}
\usepackage{aligned-overset}
\usepackage{circuitikz}

\theoremstyle{plain}
\newtheorem{theorem}{Theorem}[section]
\newtheorem{lemma}[theorem]{Lemma}

\theoremstyle{definition}

\newtheorem{remark}[theorem]{Remark}

\newtheorem{non-example}[theorem]{Non-example}

\newcommand{\ketbra}[2]{|#1\rangle\!\langle#2|}

\DeclareRobustCommand{\abbrevcrefs}{%
\Crefname{theorem}{Thm.}{Thms.}%
\Crefname{corollary}{Cor.}{Cors.}%
\Crefname{lemma}{Lem.}{Lems.}%
\Crefname{remark}{Rmk.}{Rmks.}%
\Crefname{proposition}{Prop.}{Props.}%
\Crefname{equation}{Eq.}{Eqs.}%
\Crefname{example}{Ex.}{Exs.}%
}

\DeclareRobustCommand{\Cshref}[1]{{\abbrevcrefs\Cref{#1}}}

\newcommand*{\ee}{\mathrm{e}}

\newcommand*{\cB}{\mathcal{B}}

\newcommand*{\cH}{\mathcal{H}}

\newcommand*{\cP}{\mathcal{P}}

\newcommand*{\cU}{\mathcal{U}}

\newcommand*{\R}{\mathbb{R}}

\newcommand*{\C}{\mathbb{C}}

\newcommand*{\LOCC}{\mathrm{LOCC}}

\newcommand*{\eps}{\varepsilon}
\newcommand*{\diag}{\mathrm{diag}}

\newcommand*{\id}{\mathds{1}}

\newcommand*{\tr}{\mathrm{tr}}
\newcommand*{\ket}[1]{| #1 \rangle}
\newcommand*{\bra}[1]{\langle #1 |}

\newcommand{\proj}[1]{|#1\rangle\!\langle #1|}
\newcommand*{\braket}[1]{\langle #1 \rangle}

\newcommand*{\adj}[1]{\mathrm{adj}#1}

\newcommand*{\ci}{\mathrm{i}} % imaginary number
\newcommand*{\di}{\mathrm{d}} % integration d

\newcommand{\norm}[1]{\left\lVert#1\right\rVert}

\usepackage{float}
\usepackage[nameinlink,capitalize,noabbrev]{cleveref}

 \allowdisplaybreaks

\definecolor{protocolblue}{HTML}{1F5A94}
\definecolor{protocolorange}{HTML}{C66A2B}
\definecolor{protocolgreen}{HTML}{2F7D68}
\definecolor{protocolgray}{HTML}{68717A}

\title{Essentially optimal gate teleportation} 

    \author{\normalsize Lukas Schmitt$^{1,2}$ and David Sutter$^{2}$}
     \affil{\small $^{1}$Institute for Theoretical Physics, ETH Zurich\\
     $^{2}$IBM Research Europe -- Zurich}
\date{}

\begin{document}

\maketitle

\begin{abstract}
Gate teleportation allows us to implement a nonlocal unitary using local operations, classical communication (LOCC), and a shared entangled state. Known deterministic teleportation protocols consume at least one full ebit and achieve optimal entanglement consumption only for Clifford gates. Here, we present a deterministic LOCC protocol for implementing the two-qubit controlled-phase gate $U_{\phi}=\diag(1,1,1,\ee^{\ci \phi})$ with $\phi \in [0,\pi]$ whose entanglement consumption is close to optimal for every $\phi$. In particular, vanishing rotation angles require vanishing entanglement. 
%We give an information-theoretic proof that the entanglement consumption of the presented protocol cannot be improved substantially.
\end{abstract}

%%%%%%%%%%%%%%%%%%%%%%%%%%%%%%%%%%%%%%%%%%%%%%%%%%%%%%%%%%%%%%%%%%%%%%%%%%%%%%%%%%%%
\section{Introduction}
Gate teleportation provides an operationally meaningful framework for nonlocal quantum computation by implementing a joint unitary on spatially separated systems using pre-shared entanglement, local operations, and classical communication (LOCC). This eliminates the need for a direct coherent interaction between distant systems, as the entanglement can be generated, distributed, and verified in advance. 
Shared pure states with the same entanglement entropy can be (asymptotically) converted into each other under LOCC~\cite{bennett96,MHR02}. For example, $n$ Bell states can be asymptotically converted into $10n$ copies of any pure state containing $0.1$ ebits of entanglement per copy. Consequently, entanglement entropy, measured in ebits, universally quantifies pure-state entanglement resources, independently of the detailed form of both the resource state and the nonlocal unitary to be implemented.
\begin{figure}[!htb]
\centering
\include{gate_teleporation_scheme}
\vspace{-10mm}
\caption{Gate teleportation of a nonlocal unitary $U_{AB}$. The unitary is implemented with an entangled resource state $\ket{\psi}_{A'B'}$ together with a LOCC protocol with respect to the bipartition $(A \otimes A'):(B \otimes B')$.}
\label{fig_teleportation}
\end{figure}

A gate teleportation protocol (see~\cref{fig_teleportation}) is optimal if it implements the respective gate exactly and deterministically using a minimal amount of entanglement. 
In general, we can take the resource state to be pure\footnote{This is justified in~\cref{rmk_main}.}, and therefore use entanglement entropy as the relevant metric.
%Without loss of generality, we can take the resource state to be pure\footnote{This is justified in~\cref{rmk_main}.}hence entanglement is fully characterized by the entanglement entropy~\cite{MHR02}. 
We formalize the teleportation cost of a bipartite unitary $U_{AB}$ as\footnote{The unitary channel induced by a unitary $U$ is denoted by $\cU(\cdot)=U(\cdot)U^\dagger$, $\cP$ is an LOCC protocol acting on $AA':BB'$ and $H(A')_\psi$ denotes the entanglement entropy, which is defined as $H(A')_{\psi}:=-\tr[\rho_{A'} \log \rho_{A'}]$ for $\rho_{A'}:=\tr_{B'}[\proj{\psi}].$}
\begin{align} \label{eq_teleportation_problem}
T_C(U_{AB}):=\inf_{\cP \in \LOCC, \ket{\psi}_{A' B'} \in \cH_{ A'  B'}} \big\{ H(A')_{\psi} : \cU_{AB}(\cdot)=  \cP(\cdot, \ket{\psi}_{A'B'}) \big\} \, ,
\end{align}
where $A'$ and $B'$ can have unbounded size. 
For practical purposes, we are also interested in the optimizers in~\cref{eq_teleportation_problem}, which are needed to perform the gate teleportation.

For Clifford gates, optimal gate teleportation is understood~\cite{GC99}, and an optimal resource state is given by the Choi state.\footnote{The Choi state of a unitary $U_{AB}$ is given by $\ket{\psi}_{AA'BB'}:=(U_{AB} \otimes \id_{A'B'})(\ket{\Phi}_{AA'}\otimes \ket{\Phi}_{BB'})$, where $\ket{\Phi}_{AA'}$ denotes a maximally entangled state.} For example, in the case of a $\mathrm{CNOT}$ gate we have $T_C(\mathrm{CNOT})=1$, i.e.~teleportation can be done with one ebit.
Beyond Clifford gates, little is known about optimal gate teleportation. Intuitively, one would hope that a unitary with little entangling power, such as a controlled-phase gate with a small angle, should require only a weakly entangled state. However, existing teleportation protocols cannot do this. For example, in~\cite[Theorem~2]{Eisert00} it was shown that any two-qubit controlled unitary can be teleported with a single ebit. In~\cite{YGC10,co10} further teleportation protocols have been derived that work for specific unitaries, but also consume entire ebits. Using state teleportation~\cite{BBCJPW93}, a local two-qubit unitary implementation is trivially possible with two ebits.

This raises the question whether it is possible to teleport a gate consuming less than an ebit~\cite{Eisert00}. 
In~\cite{SG11}, numerical evidence was presented that this might be the case. However, an actual protocol was still missing.

\paragraph{Results:}\label{par:results} Let $U_{\phi}=\diag(1,1,1,\ee^{\ci \phi})$ be the two-qubit controlled-phase gate with $\phi \in [0,\pi]$. We prove the following achievability and converse bounds.
\begin{enumerate}[(a)]
\item \textbf{Achievability:} We present a deterministic gate teleportation protocol for $U_{\phi}$ given in~\cref{fig_new_teleportation}, which uses a resource state $\ket{\psi_\phi}_{A'B'}$ with Schmidt rank $3$ that satisfies 
\begin{align}
H(A')_{\psi_{\phi}} =  h \Big(\frac{1}{1 + \sin(\phi/2)} \Big) + \frac{\sin(\phi/2)}{1+\sin(\phi/2)} \, , \label{eq_achievablity}
\end{align}
where $h(x):=-x \log(x) - (1-x) \log(1-x)$ is the binary entropy.\footnote{Logarithms are taken with base $2$.}
The precise form of the resource state, unitaries, measurements, and the proof of correctness are given in~\cref{sec_pf_achievability}.
\begin{figure}[!htb]
\centering
\include{new_gate_teleportation_scheme}
\vspace{-10mm}
\caption{Gate teleportation protocol for the controlled-phase gate $U_{\phi}=\diag(1,1,1,\ee^{\ci \phi})$ with $\phi \in [0,\pi]$ using the entangled resource state $\ket{\psi_{\phi}}_{A'B'}$. Systems $A$ and $B$ denote the input qubits. Dotted lines depict classical communication.}
\label{fig_new_teleportation}
\end{figure}

\item \textbf{Converse:} We show that any protocol that can teleport $U_{\phi}$ must consume a resource state $\ket{\psi}_{A'B'}$ whose entanglement entropy satisfies
\begin{align}
H(A')_\psi \geq h \Big(\frac{1}{1 + \sin(\phi/2)} \Big)\, . \label{eq_converse}
\end{align}
The proof is given in~\cref{sec_pf_converse}.
% Unlike for Clifford gates, the Choi state of $U_{\phi}$ cannot itself be used as the resource state for exact teleporation when $\phi \in (0,\pi)$.\footnote{Recall that the Choi state $\ket{\psi}_{AA'BB'}:=(U_{\phi} \otimes \id_{A'B'})(\ket{\Phi}_{AA'}\otimes \ket{\Phi}_{BB'})$ has entanglement entropy $H(AA')_{\psi} = h(\sin^2(\phi/4))$ which is strictly below $h (\frac{1}{1 + \sin(\phi/2)})$ for all $\phi \in (0,\pi)$. }
When assuming that the resource state has a fixed finite Schmidt rank, a substantially tighter converse bound can be derived. This is discussed in~\cref{app_improved_converse}.
\end{enumerate}

\noindent It is known~\cite[Theorem~2]{Eisert00} that $U_{\phi}$ can be teleported with one ebit. Hence, whenever the protocol from~\cref{fig_new_teleportation} requires more than one ebit we can switch to the protocol from~\cite[Theorem~2]{Eisert00}. Therefore, we have 
\begin{align} \label{eq_summary_results}
 h\Big(\frac{1}{1 + \sin(\phi/2)} \Big)
\overset{\textnormal{\Cshref{eq_converse}}}{\leq} T_C(U_{\phi})  
\overset{\textnormal{\Cshref{eq_achievablity}}}{\leq} \min\Big \{1,  h \Big(\frac{1}{1 + \sin(\phi/2)} \Big) + \frac{\sin(\phi/2)}{1+\sin(\phi/2)} \Big \} \, .
\end{align}
\Cref{fig_results_curve} plots these bounds for $\phi \in [0,\pi]$.
\begin{figure}[!htb]
\centering
\input{Fig_results}
\caption{Entanglement entropy of the resource state for teleporting a controlled-phase gate $U_{\phi}=\diag(1,1,1,\ee^{\ci \phi})$ of angle $\phi \in [0,\pi]$. The achievability bound shows the specific teleportation protocol explained in~\cref{fig_new_teleportation}. Below the dashed converse bound (in black) no teleportation is possible. If we limit the Schmidt rank of the resource state to 3, a stronger converse bound holds (in dotted blue).}
\label{fig_results_curve}
\end{figure}

\begin{remark}[Beyond controlled-phase gates] \label{rmk_unitaries}
The teleportation protocol from~\cref{fig_new_teleportation} for the controlled-phase gate $U_{\phi}=\diag(1,1,1,\ee^{\ci \phi})$ extends to broader classes of two-qubit unitaries.
\begin{enumerate}[(i)]
\item \textbf{Arbitrary controlled two-qubit unitaries:} Let $\bar U = \proj{0} \otimes V_0 + \proj{1} \otimes V_1$ be an arbitrary controlled two-qubit gate where $V_0$ and $V_1$ are arbitrary single-qubit unitaries. Diagonalizing $V_0^\dagger V_1 = S\, \diag(\ee^{\ci \alpha}, \ee^{\ci \beta}) S^\dagger$ and setting $D_{\alpha}=\diag(1,\ee^{\ci \alpha})$ we have the identity
\begin{align} \label{eq_teleporting_general_gate}
\bar U = (D_{\alpha} \otimes V_0 S) U_{\beta - \alpha} (\id \otimes S^\dagger) \, . 
\end{align}
Since single qubit gates are local,~\cref{eq_teleporting_general_gate} shows that teleporting $\bar U$ is equivalent to the teleportation of $U_{\beta - \alpha}$.\footnote{Note that $U_{\phi}$ is locally equivalent to $U_{-\phi}$.} 

A similar argument also applies for two-qubit rotations of the form $U_{\sigma}(\theta):=\exp( - \ci \frac{\theta}{2} \sigma \otimes \sigma)$ where $\sigma \in \{X,Y,Z\}$. For $|\theta| \leq \pi/2$, this unitary is locally equivalent to $U_{2|\theta|}$.

\item \textbf{Arbitrary two-qubit unitaries:} Any two-qubit unitary has a KAK decomposition
\begin{align}
U_{AB} = (V_1 \otimes V_2) \exp\Big(- \frac{\ci}{2}(\theta_X X \otimes X + \theta_Y  Y \otimes Y + \theta_Z Z \otimes Z) \Big) (V_3 \otimes V_4) \, ,
\end{align}
for single-qubit unitaries $V_1,V_2,V_3,V_4$, and $|\theta_\sigma|\leq \pi/2$ for $\sigma \in \{X,Y,Z\}$.
Since the three Pauli products commute, $U_{AB}$ is locally equivalent to $U_{X}(\theta_X) U_{Y}(\theta_Y) U_{Z}(\theta_Z)$. Hence, 
\begin{align}
T_C(U_{AB}) 
&= T_C\big(U_{X}(\theta_X) U_{Y}(\theta_Y) U_{Z}(\theta_Z)\big) \\
&\leq T_C\big(U_{X}(\theta_X) \big) +  T_C\big(U_{Y}(\theta_Y) \big) +  T_C\big(U_{Z}(\theta_Z) \big) \label{eq_step_suboptimal}\\
& = T_C(U_{2|\theta_X|}) + T_C(U_{2|\theta_Y|}) +T_C(U_{2|\theta_Z|}) \, . \label{eq_general_2qubit}
\end{align}
The protocol from~\cref{fig_new_teleportation} can be used to teleport $U_{AB}$ (by separately teleporting $U_{X}(\theta_X)$, $U_{Y}(\theta_Y)$, and $U_{Z}(\theta_Z)$) consuming
\begin{align}
\sum_{\sigma \in \{X,Y,Z\}} \left( h \Big(\frac{1}{1 + \sin(|\theta_\sigma|)} \Big) + \frac{\sin(|\theta_\sigma|)}{1+\sin(|\theta_\sigma|)} \right)
\end{align}
ebits. We emphasize that this teleportation protocol may be suboptimal due to the step in~\cref{eq_step_suboptimal}, where we treat the nonlocal unitary $U_{X}(\theta_X) U_{Y}(\theta_Y) U_{Z}(\theta_Z)$ as three individual unitaries (instead of a single unitary). However, it may still be more efficient than the trivial gate teleportation protocol that consumes two ebits.

\end{enumerate}
\end{remark}

\begin{remark}[Properties of the resource state] \label{rmk_main} \
\begin{enumerate}[(i)]
\item \textbf{The resource state is pure:} Assume that $\cP(\cdot,\rho) \in \LOCC$ is a teleportation protocol that, together with a mixed resource state $\rho_{A'B'}$ implements a unitary $U_{AB}$. We can write $\rho_{A'B'} = \sum_i p_i \proj{\psi_i}_{A'B'}$, where we choose a decomposition that achieves the entanglement of formation~\cite{BDSW96}. By linearity of the teleportation protocol we have
\begin{align}
\cU(\cdot) = \cP(\cdot,\rho) = \sum_{i} p_i \cP(\cdot,\ket{\psi_i}) \, .
\end{align}
Unitary channels are the extreme points of the set of all completely positive and trace-preserving maps~\cite[Theorem~5]{choi75}. Hence, for all $i$ such that $p_i>0$ we must have $\cP(\cdot,\ket{\psi_i}) = \cU(\cdot)$. Thus, we have shown that we can also teleport with the pure resource state $\ket{\psi_i}$.
In addition, since we are looking for the pure resource state $\ket{\psi_i}$ that has minimal entanglement, we see that at least one component satisfies $H(A')_{\psi_i} \leq E_F(\rho)$, where $E_F$ denotes the entanglement of formation~\cite{BDSW96}. Thus, the entanglement of a pure resource state is also smaller than for mixed states.

\item \textbf{Schmidt rank at least three:} For $\phi \in (0,\pi]$ the operator Schmidt rank of $U_{\phi}$ is two, and a pure resource state of the same Schmidt rank must be maximally entangled in any deterministic LOCC implementation~\cite{SG11,STM11}. Thus, any teleportation protocol that consumes less than an ebit must exploit a resource state with Schmidt rank at least three.
\end{enumerate}
\end{remark}

\paragraph{Application:} To illustrate the usefulness of our teleportation scheme, consider the implementation of the quantum Fourier transform (QFT)~\cite[Section~5]{nielsenChuang_book} in a distributed setting.
Suppose that we want to implement a (bit-reversed) QFT on $2n$ qubits where Alice holds the first $n$ qubits and Bob the last $n$ qubits. The task is to implement this QFT with an entangled resource state and LOCC operations. Construction~\cite[Theorem~2]{Eisert00}, which teleports every nonlocal controlled unitary consuming one ebit, uses $n^2$ ebits in total. In contrast, with the teleportation protocol presented in~\cref{fig_new_teleportation}, we can implement the QFT with a constant number of $12.1869$ ebits for any $n$. This is explained in detail in~\cref{sec_QFT}. 

%%%%%%%%%%%%%%%%%%%%%%%%%%%%%%%%%%%%%%%%%%%%%%%%%%%%%%%%%%%%%%%%%%%%%%%%%%%%%%%%%%%
%%%%%%%%%%%%%%%%%%%%%%%%%%%%%%%%%%%%%%%%%%%%%%%%%%%%%%%%%%%%%%%%%%%%%%%%%%%%%%%%%%%
\section{Proof of the achievability result} 
\label{sec_pf_achievability}
\subsection{Protocol}
The case $\phi=0$ is trivial and we will assume $\phi \in (0,\pi]$. 
As explained in~\cref{rmk_main} we need to use a resource state with Schmidt rank at least three to consume strictly less entanglement than one ebit.
Consider
\begin{align} \label{eq_form_resource_state}
    \ket{\psi_p}_{A'B'} = \sqrt{p_0} \ket{0,0}_{A'B'} + \sqrt{p_1} \ket{1,1}_{A'B'} + \sqrt{p_2} \ket{2,2}_{A' B'} \, ,
\end{align}
where $p=(p_0,p_1,p_2) \in \R^3_+$ satisfies $p_0 + p_1 + p_2=1$, $p_0\geq1/2\geq p_1\geq p_2>0$ and 
\begin{equation}\label{eq_feasibility}
    \sin^2(\phi/2)(1-2p_1)(1-2p_2)\leq 4p_1p_2 \, .
\end{equation}
If $p_0<1/2$, the entanglement entropy would be greater than 1 (see~\cref{lem_tailbound}).
We show that the protocol in~\cref{fig_new_teleportation} implements $U_\phi = \diag(1,1,1,\ee^{\ci \phi})$ for every resource state that satisfies~\cref{eq_feasibility}.
We then also show that among the feasible states the choice 
\begin{equation}
    \ket{\psi_q}_{A'B'} = \sqrt{q} \ket{0,0}_{A'B'} + \sqrt{\frac{1-q}{2}} \ket{1,1}_{A'B'} + \sqrt{\frac{1-q}{2}} \ket{2,2}_{A' B'} \, ,
\end{equation}
leads to the minimal entanglement entropy, that is, $H(A')_{\psi} = h(q)+1-q$, with $q=\frac{1}{1+\sin(\phi/2)}$.

Let $A$ and $B$ denote the two input registers. The protocol consists of the following four rounds of local operations followed by classical communication.\footnote{As we will see below, we need $4$ bits of classical communication. Alice and Bob send and receive both two bits of classical information. More precisely, Alices sends twice a single bit to Bob whereas Bob sends one ternary outcome (encoded in two bits) to Alice.} Its structure is shown in~\Cref{fig_new_teleportation}.

\vspace{2mm}
\noindent \textbf{Round 1 (Alice):}
Alice needs an ancilla qubit $\ket{0}_{\bar A}$, on which she later performs a measurement in the computational basis. 
Alice applies the unitary
\begin{align} \label{eq_first_unitary}
U_{\phi,1} = \ket{0}\bra{0}_A \otimes W^{(0)}_{A' \bar A} + \ket{1}\bra{1}_A \otimes W^{(1)}_{A' \bar A}
\end{align}
to $\bar A A A'$. The unitaries $W^{(0)}_{A' \bar A}$ and $W^{(1)}_{A' \bar A}$ are given in the basis $\cB=\{\ket{00},\ket{01},\ket{10},\ket{11},\ket{20},\ket{21}\}$ as
\begin{align}
W^{(0)}_{A' \bar A} = \begin{pmatrix}
\gamma_1 & -\gamma_2 & 0 & 0 & 0 & 0 \\
\gamma_2 & \gamma_1 & 0 & 0 & 0 & 0 \\
0 & 0 & 1 & 0 & 0 & 0 \\
0 & 0 & 0 & 0 & 1 & 0 \\
0 & 0 & 0 & 1 & 0 & 0 \\
0 & 0 & 0 & 0 & 0 & 1 
\end{pmatrix} \qquad \textnormal{and} \qquad 
W^{(1)}_{A' \bar A} = \begin{pmatrix}
\gamma_2 & -\gamma_1 & 0 & 0 & 0 & 0 \\
\gamma_1 & \gamma_2 & 0 & 0 & 0 & 0 \\
0 & 0 & 0 & 0 & 1 & 0 \\
0 & 0 & 1 & 0 & 0 & 0 \\
0 & 0 & 0 & 1 & 0 & 0 \\
0 & 0 & 0 & 0 & 0 & 1 
\end{pmatrix} \, ,
\end{align}
where \smash{$\gamma_1:= \sqrt{\frac{1/2 - p_1}{p_0}}$}, \smash{$\gamma_2:= \sqrt{\frac{1/2 - p_2}{p_0}}$}. Since $p_0 + p_1 + p_2 =1$ we have $\gamma_1^2 + \gamma_2^2 =1$, so both matrices are unitary. 
After this unitary, Alice measures the register $\bar A$ in the computational basis and sends the outcome $x \in \{0,1\}$ to Bob.
Conditioned on $x$ the support of $A'$ is two-dimensional. The measurement outcome $x=0$ occurs with probability $1/2$, independently of the input on $A$, and therefore reveals no information about that input.

\vspace{2mm}
\noindent \textbf{Round 2 (Bob):}
For outcome $x$, define
\begin{align} \label{eq_index_Bob}
 (i_x,j_x)&=\begin{cases}(1,2),&x=0\\(2,1),&x=1 \end{cases}
\end{align}
and the following operations in the computational basis of $B'$
\begin{align}
    P_0 = \begin{pmatrix}
        1 & 0 & 0\\
        0& 1 & 0 \\
        0& 0& 1
    \end{pmatrix}, \quad P_1 = \begin{pmatrix}
        1 & 0 & 0\\
        0& 0 & 1 \\
        0& 1 & 0
    \end{pmatrix}\,,  \quad      F_3= \frac{1}{\sqrt{3}}\begin{pmatrix}
        1 & 1 & 1\\
        1& \omega & \omega^2 \\
        1& \omega^2& \omega
    \end{pmatrix}, \quad 
    D_x = \begin{pmatrix}
        1 & 0 & 0\\
        0& d_{i_x} & 0 \\
        0& 0 & d_{j_x} 
    \end{pmatrix} \, .
\end{align}
Here, define $\omega := \ee^{2 \pi \ci /3}$ and phases $d_{i_x}$ and $d_{j_x}$ as specified in \cref{app_phases}.
Bob then applies the unitary
\begin{align}
 U_{\phi,2}^{(x)}
 =\proj{0}_B\otimes F_3P_x+\proj{1}_B\otimes F_3D_xP_x \, ,
 \label{eq:U2}
\end{align}
then measures $B'$ in the computational basis $\{\ket{0}_{B'},\ket{1}_{B'},\ket{2}_{B'} \}$ and sends the outcome $y\in \{0,1,2\}$ to Alice.

\vspace{2mm}
\noindent \textbf{Round 3 (Alice):}
After receiving $y$, Alice performs the unitary
\begin{equation}
    U_{\phi,3}^{(x,y)} = \ketbra{0}{0}_A \otimes \id_{A'} + \ketbra{1}{1}_A \otimes V^{(x,y)}_{A'} \, ,
\end{equation}
where 
\begin{equation}
    V_{A'}^{(x,y)} = \begin{pmatrix}
\displaystyle
\sqrt{\frac{\frac{1}{2}-p_{i_x}}{\frac{1}{2}-p_{j_x}}}
\frac{d_{j_x}-\ee^{\ci\phi}}{d_{j_x}-1}
&
\displaystyle
\frac{\sqrt{\frac{1}{2}-p_{i_x}}}{\sqrt{p_{j_x}}\omega^{2y}}
\frac{\ee^{\ci\phi}-1}{d_{j_x}-1}
\\[12pt]
\displaystyle
\frac{\sqrt{p_{i_x}}\omega^y}{\sqrt{\frac{1}{2}-p_{j_x}}}
\frac{d_{j_x}-\ee^{\ci\phi}d_{i_x}}{d_{j_x}-1}
&
\displaystyle
\sqrt{\frac{p_{i_x}}{p_{j_x}}}\omega^{-y}
\frac{e^{i\phi}d_{i_x}-1}{d_{j_x}-1}
\end{pmatrix}.
\end{equation}
Afterwards, she measures $A'$ in the computational basis and sends the outcome $z \in \{0,1\}$ to Bob.

\vspace{2mm}
\noindent \textbf{Round 4 (Bob):}
If Bob receives $z=1$, he implements the unitary
\begin{equation}
    U_{\phi,4}^{(x,z)} = \diag\left(1,\overline{d_{i_x}} \,\right)_B \, ,
\end{equation}
otherwise, he does nothing.   

\subsection{Correctness} \label{ssec:correct}
In this section, we prove that the protocol described above indeed implements the unitary $U_{\phi}$. To do so, we consider an arbitrary input state\footnote{By linearity it suffices to consider an arbitrary pure input state on the $AB$ system.} 
\begin{equation}
    \ket{\Lambda}_{AB} = \lambda_{00} \ket{00}_{AB} + 
    \lambda_{01} \ket{01}_{AB} +
    \lambda_{10} \ket{10}_{AB} +
    \lambda_{11} \ket{11}_{AB} \, .
\end{equation}
For reference,~\cref{fig_correctness} labels the conditional state after each measurement (conditioned on the specific measurement outcome).
\begin{figure}[!htb]
\centering
\include{Teleportation_correctness}
\vspace{-10mm}
\caption{Teleportation circuit for $U_{\phi}$. To prove correctness we need to show that $\ket{\Omega_4^{(x,y,z)}}_{AB}$ is proportional to $U_{\phi} \ket{\Lambda}_{AB}$ for all $x,z \in \{0,1\}$ and for all $y \in \{0,1,2\}$.}
\label{fig_correctness}
\end{figure}

\vspace{2mm}
\noindent \textbf{Round 1 (Alice):}
Given the measurement outcome $x \in \{0,1\}$, Alice effectively applies the Kraus operator
\begin{equation}
    \ketbra{0}{0}_A \otimes K^{(0)}_{A'} + \ketbra{1}{1}_A \otimes K^{(1)}_{A'} 
\end{equation}
with 
\begin{equation}
    K^{(0)}_{A'} = \sqrt{\frac{1/2-p_{i_x}}{p_0}} \ketbra{0}{0}_{A'} + \ketbra{1}{i_x}_{A'} \qquad \textnormal{and} \qquad K^{(1)}_{A'} =\sqrt{\frac{1/2-p_{j_x}}{p_0}} \ketbra{0}{0}_{A'} + \ketbra{1}{j_x}_{A'} 
\end{equation}
to the input state $\ket{\Lambda}_{AB} \otimes \ket{\psi_p}_{A'B'}$. 
Alice's operation, corresponding to outcome $x$, yields the unnormalized state
\begin{align} \label{eq_state_after_step1}
   \ket{\Omega_1^{(x)}}_{AA'BB'} = & \ket{0}_A (\lambda_{00} \ket{0}_B + \lambda_{01} \ket{1}_B) \otimes \left( \sqrt{1/2-p_{i_x}} \ket{0,0}_{A'B'}+ \sqrt{p_{i_x}} \ket{1,i_x}_{A'B'} \right) \nonumber \\
    &\hspace{10mm}+ \ket{1}_A (\lambda_{10} \ket{0}_B + \lambda_{11} \ket{1}_B) \otimes  \left(\sqrt{1/2-p_{j_x}} \ket{0,0}_{A'B'}+ \sqrt{p_{j_x}} \ket{1,j_x}_{A'B'}\right) .
\end{align}
Importantly, each measurement result occurs with probability $1/2$, independently of the input. Therefore, Alice does not learn any information about $A$.

\vspace{2mm}
\noindent \textbf{Round 2 (Bob):}
After Alice's measurement, the two A-branches (visible in~\cref{eq_state_after_step1}) lead to different qutrit coordinates for Bob's $B'$ system:
\begin{align}
    A=0 \rightarrow (0,i_x) \qquad \textnormal{and} \qquad A=1 \rightarrow (0,j_x)\, .
\end{align}
If Bob measured $B'$ at this point, he would obtain information about $A$ and destroy the required coherence. 
Therefore, he uses the Fourier basis, which treats all three qutrit coordinates equally.
Bob's first operation $P_x$ reorders the qutrit to $(0,i_x,j_x)$ and $D_x = \diag(1,d_{i_x},d_{j_x})$ implements the nonlocal phase of $U_\phi$, where the phases $d_{i_x}$ and $d_{j_x}$ are specified in \cref{app_phases}.
Bob then applies $F_3$ and measures $B'$.
After Bob obtains the measurement outcome $y \in \{0,1,2\}$, we get the unnormalized state
\begin{equation}
    \ket{\Omega_2^{{(x,y)}}}_{AA'B} = \lambda_{00} \ket{00}_{AB} \ket{l_0}_{A'} + \lambda_{01} \ket{01}_{AB} \ket{l_1}_{A'} + \lambda_{10} \ket{10}_{AB} \ket{r_0}_{A'} + \lambda_{11} \ket{11}_{AB} \ket{r_1}_{A'}
\end{equation}
with 
\begin{align}
    \ket{l_0} &= \sqrt{1/2 - p_{i_x}} \ket{0} + \sqrt{p_{i_x}} \omega^y \ket{1} \, , &&\ket{l_1} = \sqrt{1/2 - p_{i_x}} \ket{0} + \sqrt{p_{i_x}} \omega^y d_{i_x}\ket{1} \\
    \ket{r_0} &= \sqrt{1/2 - p_{j_x}} \ket{0} + \sqrt{p_{j_x}} \omega^{2y} \ket{1}\,, &&\ket{r_1} = \sqrt{1/2 - p_{j_x}} \ket{0} + \sqrt{p_{j_x}} \omega^{2y} d_{j_x} \ket{1} \, .
\end{align}
We suppress the common Fourier factor $1/\sqrt{3}$ for readability.

\vspace{2mm}
\noindent \textbf{Round 3 (Alice):}
%Alice performs a measurement conditioned on $A$.
Write the vectors $\mathbf{l}_0,\mathbf{l}_1,\mathbf{r}_0,\mathbf{r}_1 \in \C^{2\times 1}$ for the states defined above. For measurement outcome $z$ let $\mathbf{a}_z^T,\mathbf{b}_z^T \in \C^{1\times 2}$ denote the measurement rows used in the $A=0$ and $A=1$ branch.
The resulting state on $AB$ is, up to normalization, 
\begin{align} \label{eq_omega_3_early}
    \ket{\Omega_3^{(x,y,z)}}_{AB} = \lambda_{00} \mathbf{a}_z^T \mathbf{l}_0 \ket{00} +\lambda_{01} \mathbf{a}_z^T \mathbf{l}_1 \ket{01} + \lambda_{10} \mathbf{b}_z^T \mathbf{r}_0 \ket{10} + \lambda_{11}\mathbf{b}_z^T \mathbf{r}_1\ket{11} \, .
\end{align}
Bob's final correction will be a unitary of the form $U^{(x,y,z)}_{\phi,4} = \diag(u_0,u_1)_B$.
This applies the same correction $u_0$ to $\ket{00}$ and $\ket{10}$, so their amplitudes must already agree before the correction. Thus,
\begin{equation} \label{eq_cond0010}
    \mathbf{a}_z^T \mathbf{l}_0  = \mathbf{b}_z^T \mathbf{r}_0 \, .
\end{equation}
Similarly, $U^{(x,y,z)}_{\phi,4}$ applies $u_1$ to $\ket{01}$ and $\ket{11}$ and since $U_\phi$ applies $\ee^{\ci\phi}$ to $\ket{11}$, we also need
\begin{equation}\label{eq_cond0111}
    \ee^{\ci\phi} \mathbf{a}_z^T \mathbf{l}_1 =\mathbf{b}_z^T \mathbf{r}_1 
\end{equation}
before Bob's round. Define the matrices
\begin{equation}
    L := \begin{pmatrix}
        \mathbf{l}_0^T\\
        \ee^{\ci \phi} \mathbf{l}_1^T
    \end{pmatrix}\!=\! \begin{pmatrix}
        \sqrt{1/2 - p_{i_x}} & \sqrt{p_{i_x}} \omega^y \\
        \ee^{\ci\phi}\sqrt{1/2 - p_{i_x}} & \ee^{\ci \phi}\sqrt{p_{i_x}} \omega^y d_{i_x}
    \end{pmatrix}, \,   R :=  \begin{pmatrix}
        \mathbf{r}_0^T\\
        \mathbf{r}_1^T
    \end{pmatrix}\!=\! \begin{pmatrix}
        \sqrt{1/2 - p_{j_x}} & \sqrt{p_{j_x}} \omega^{2y} \\
        \sqrt{1/2 - p_{j_x}} & \sqrt{p_{j_x}} \omega^{2y} d_{j_x}
    \end{pmatrix},
\end{equation}
then \cref{eq_cond0010,eq_cond0111} can be written as $L \mathbf{a}_z = R \mathbf{b}_z$. If $R$ is invertible, we get $\mathbf{b}_z = R^{-1}L \mathbf{a}_z$.
Let us therefore define $T_{x,y} := R^{-1}L$.
Once Alice chooses her $A=0$ measurement basis, the phase conditions encoded in $T_{x,y}$ uniquely determine her $A=1$ basis.

To show that $L$ and $R$ are invertible, we calculate
\begin{align}
    \det L = \ee^{\ci \phi} \sqrt{1/2 - p_{i_x}}\sqrt{p_{i_x}} \omega^y(d_{i_x} - 1) \quad \textnormal{and} \quad
    \det R = \sqrt{1/2 - p_{j_x}}\sqrt{p_{j_x}} \omega^{2y}(d_{j_x} - 1) \, .
\end{align}
Since we assume $0<p_1,p_2 <1/2$ and $\phi \in (0,\pi]$,~\cref{eq_circle} implies $d_{i_x},d_{j_x} \ne 1$. Hence, both matrices are invertible.
We now derive~\cref{eq_circle} from requiring that $T_{x,y}$ be unitary -- otherwise Alice's operation would not be physical.

To satisfy $T_{x,y} T_{x,y}^\dagger= \id$, we need $LL^\dagger = RR^\dagger$. A simple calculation results in
\begin{equation}
    LL^\dagger = \begin{pmatrix}
        1/2 & \ee^{-\ci\phi}\big((1/2-p_{i_x})+p_{i_x} \overline{d_{i_x}} \,\big) \\
        \ee^{\ci\phi}\big((1/2-p_{i_x})+p_{i_x} d_{i_x}\big) & 1/2
    \end{pmatrix}
\end{equation}
and
\begin{equation}
    RR^\dagger = \begin{pmatrix}
        1/2 & (1/2-p_{j_x})+p_{j_x} \overline{d_{j_x}}\,\, \\
        (1/2-p_{j_x})+p_{j_x} d_{j_x} & 1/2
    \end{pmatrix} \, .
\end{equation}
Matching the non-diagonal terms then gives~\cref{eq_circle}.
Since $R$ is invertible, we can explicitly calculate
\begin{equation}
    T_{x,y} = \begin{pmatrix}
\displaystyle
\sqrt{\frac{\frac12-p_{i_x}}{\frac12-p_{j_x}}}
\frac{d_{j_x}-\ee^{\ci\phi}}{d_{j_x}-1}
&
\displaystyle
\frac{\sqrt{p_{i_x}}\omega^y}{\sqrt{\frac12-p_{j_x}}}
\frac{d_{j_x}-\ee^{\ci\phi}d_{i_x}}{d_{j_x}-1}
\\[12pt]
\displaystyle
\frac{\sqrt{\frac12-p_{i_x}}}{\sqrt{p_{j_x}}\omega^{2y}}
\frac{\ee^{\ci\phi}-1}{d_{j_x}-1}
&
\displaystyle
\sqrt{\frac{p_{i_x}}{p_{j_x}}}\omega^{-y}
\frac{\ee^{\ci\phi}d_{i_x}-1}{d_{j_x}-1}
\end{pmatrix}.
\end{equation}

The unitary $T_{x,y}$ fixes the ratios between the $\ket{00},\ket{01}$ amplitudes and between the $\ket{10},\ket{11}$ amplitudes. It remains to make their magnitudes equal. Alice therefore chooses the computational rows $\mathbf{a}_0^T=(1,0)$ and $\mathbf{a}_1^T=(0,1)$ for which
\begin{equation}\label{eq_same_magn}
    |\mathbf{a}_z^T \mathbf{l}_0| = |\mathbf{a}_z^T \mathbf{l}_1| \, ,
\end{equation}
because $\ket{l_0}$ and $\ket{l_1}$ differ only by a phase in their second coordinate. Setting $\mathbf{b}_z=T_{x,y} \mathbf{a}_z$, Alice's unitary is then given by
\begin{equation}
    U^{(x,y)}_{\phi,3} = \ketbra{0}{0}_A \otimes \id_{A'} + \ketbra{1}{1}_A \otimes (T_{x,y}^T)_{A'} \, .
\end{equation}

\vspace{2mm}
\noindent \textbf{Round 4 (Bob):}
Define $c_z = \mathbf{a}_z^T \mathbf{l}_0$ and $k_z = \mathbf{a}_z^T \mathbf{l}_1$ .
After Alice's final round,~\cref{eq_omega_3_early,eq_cond0010,eq_cond0111}, give
\begin{equation}
    \ket{\Omega_3^{(x,y,z)}}_{AB} = c_z \lambda_{00}\ket{00}_{AB} + k_z \lambda_{01} \ket{01}_{AB} + c_z \lambda_{10} \ket{10}_{AB} + \ee^{\ci\phi}k_z \lambda_{11}\ket{11}_{AB} \, .
\end{equation}
Therefore, Bob applies 
\begin{equation}
    U^{(x,y,z)}_{\phi,4} = \diag\left(\frac{\overline{c_z}}{|c_z|},\frac{\overline{k_z}}{|k_z|}\right)_B
\end{equation}
and since $|c_z|=|k_z|$ (see~\cref{eq_same_magn}),
this leads to the unnormalized state
\begin{equation}
    \ket{\Omega_4^{(x,y,z)}}_{AB} 
    = |c_z| \left( \lambda_{00}\ket{00}_{AB} +  \lambda_{01} \ket{01}_{AB} +  \lambda_{10} \ket{10}_{AB} + \ee^{\ci\phi}\lambda_{11}\ket{11}_{AB} \right)
    = |c_z| U_{\phi} \ket{\Lambda}_{AB} \, .
\end{equation}
Since this worked for all measurement outcomes $x,y,z$, the protocol implements $\diag(1,1,1,\ee^{\ci \phi})$ deterministically.
Note that $c_0=k_0=\sqrt{1/2-p_{i_x}}$ and $k_1=c_1 d_{i_x}=\sqrt{p_{i_x}} \omega^y d_{i_x}$.
Thus, Bob does nothing in the case of $z=0$ and applies \smash{$\diag(1,\overline{d_{i_x}}\,)$} otherwise.

%%%%%%%%%%%%%%%%%%%%%%%%%%%%%%%%%%%%%%%%%%%%%%%%%%%%%%%%%%%%%%%%%%%%%%%%%%%%%%%%%%%%
%%%%%%%%%%%%%%%%%%%%%%%%%%%%%%%%%%%%%%%%%%%%%%%%%%%%%%%%%%%%%%%%%%%%%%%%%%%%%%%%%%%%
\subsection{Minimal entanglement entropy}
Any state satisfying~\cref{eq_feasibility} can be used to implement $U_\phi$. In particular,
\begin{align}
   \ket{\psi_\phi}_{A'B'} = \sqrt{q} \ket{00} + \sqrt{\frac{1-q}{2}} \ket{11} + \sqrt{\frac{1-q}{2}} \ket{22}  \, ,
\end{align}
with \smash{$q=\frac{1}{1+\sin(\phi/2)}$}. This gives a resource state with $H(A')_\psi = h(q) + 1-q$ as discussed in~\cref{par:results}.
Note that no other choice of rank 3 state can perform better in our protocol. At fixed $p_0>1/2$, let $\Delta=p_1-p_2$.
The feasibility condition~\cref{eq_feasibility} then yields
\begin{align}\label{eq_entr_feas}
    \sin^2(\phi/2)(p_0^2-\Delta^2) \leq (1-p_0)^2 - \Delta^2 \, .
\end{align}   
At $p_0=q$,~\cref{eq_entr_feas} forces $\Delta=0$. Therefore, $\ket{\psi}_{A'B'}$ is the only rank 3 states that saturates the simple converse bound for $p_0$.

Fixing $p_0$ and setting $p_1 = \frac{1-p_0+\Delta}{2}$ and $p_2 = \frac{1-p_0-\Delta}{2}$ shows
\begin{align}
    \frac{\partial H(p)}{\partial \Delta} = \frac{1}{2} \log \frac{p_2}{p_1} \leq 0 \, .
\end{align}
Thus, at fixed $p_0$, making $p_1$ larger and $p_2$ smaller decreases the entropy. Therefore, a minimizing state cannot have slack in~\cref{eq_entr_feas}. If it did, one could increase $\Delta$ slightly and lower the entropy. Hence, any minimizer satisfies ~\cref{eq_entr_feas} with equality.
The equality case can be parametrized by $w = \sqrt{1-\cos^2(\phi/2)v^2}$ and $0\leq v \leq 1$ such that
\begin{align}
 (p_0,p_1,p_2)= p(v) = \frac{1}{\sin(\phi/2) + w} \left( w,\frac{\sin(\phi/2)(1+v)}{2}, \frac{\sin(\phi/2)(1-v)}{2} \right) \, ,
\end{align}
where the endpoints $v=0$ and $v=1$ correspond to $\ket{\psi_\phi}_{A'B'}$ and an ebit. Taking the second derivative of $H(p(v))$ while demanding $H'(p(v))=0$,  gives $H''(p(v))<0$.  The infimum is therefore at the boundary, whose entropy is $h(q)+1-q$ and $1$.
%%%%%%%%%%%%%%%%%%%%%%%%%%%%%%%%%%%%%%%%%%%%%%%%%%%%%%%%%%%%%%%%%%%%%%%%%%%%%%%%%%%
%%%%%%%%%%%%%%%%%%%%%%%%%%%%%%%%%%%%%%%%%%%%%%%%%%%%%%%%%%%%%%%%%%%%%%%%%%%%%%%%%%%
%%%%%%%%%%%%%%%%%%%%%%%%%%%%%%%%%%%%%%%%%%%%%%%%%%%%%%%%%%%%%%%%%%%%%%%%%%%%%%%%%%%
\section{Proof of the converse result} \label{sec_pf_converse}
Let the two input qubits of the gate be kept in the registers $A$ and $B$.  To
distinguish them from the resource registers, we write the pure resource state
in Schmidt form as \smash{$\ket{\psi_{\phi}}_{A'B'} =\sum_{j=0}^{d-1}\sqrt{p_j}\ket{j}_{A'}\ket{j}_{B'}$} with $p_0\geq\ldots \geq p_{d-1}\geq 0$ and \smash{$\sum_{j=0}^{d-1}p_j=1$}.
When both the gate input and the resource are present, the relevant bipartition is $AA':BB'$. Note that $H(A')_{\psi}=H(p)$.
Set $s:=\sin(\phi/2)$ and $q:=\frac{1}{1+s}$. Since $\phi\in[0,\pi]$, we have $s\in[0,1]$ and therefore
$q\in[1/2,1]$.  Consider the normalized two-qubit input state
\begin{align}
\ket{\eta_{\phi}}_{AB}
  :=\frac{1}{2}\big(
  \ket{00}+\ket{01}-\ee^{\ci\phi}\ket{10}+\ket{11}
  \big)\, .
\label{eq_converse_witness}
\end{align}
The Schmidt probabilities of $\ket{\eta_{\phi}}$ are $\lambda_{+}=\frac{1+s}{2}$ and $\lambda_{-}=\frac{1-s}{2}$.
Next, we apply the gate to this state.  Since
$U_{\phi}=\diag(1,1,1,\ee^{\ci\phi})$, we have
\begin{align}
\ket{\omega_{\phi}}_{AB}
  :=U_{\phi}\ket{\eta_{\phi}}_{AB}
  =\frac{1}{2}\big(
  \ket{00}+\ket{01}-\ee^{\ci\phi}\ket{10}
  +\ee^{\ci\phi}\ket{11}
  \big) \, ,
\label{eq_converse_witness_output}
\end{align}
which is maximally entangled and has Schmidt probability vector $(\frac{1}{2},\frac{1}{2})$.

By assumption, the teleportation protocol implements $U_{\phi}$ exactly and
deterministically on every input, including inputs that are entangled across
$A$ and $B$.  We may therefore run it on
$\ket{\eta_{\phi}}_{AB}$ while supplying the resource
$\ket{\psi}_{A'B'}$.  After the protocol, all local work registers and
classical records may be discarded locally.  Composing the protocol with
these local traces still gives an LOCC channel, and this channel maps the
initial density operator to $\proj{\omega_{\phi}}_{AB}$.  Equivalently, across
the bipartition $A A':B B'$, it performs the deterministic pure-state conversion
\begin{align}
\ket{\eta_{\phi}}_{AB}\otimes\ket{\psi}_{A'B'}
  \longmapsto
\ket{\omega_{\phi}}_{AB}\, .
\label{eq_converse_pure_state_conversion}
\end{align}
Nielsen's majorization criterion~\cite{Nielsen99} thus yields 
\begin{align}\label{eq_nielsen}
\ket{\eta_{\phi}}_{AB}\otimes\ket{\psi}_{A'B'} \preceq \ket{\omega_{\phi}}_{AB} \, .
\end{align}

The reduced state of a tensor-product pure state is the tensor product of the two reduced states.  Its eigenvalues, and hence its Schmidt probabilities, are therefore all pairwise products of the Schmidt probabilities of its two factors.  Hence, the Schmidt probabilities of the state on the left-hand side of~\cref{eq_nielsen} are $\{\lambda_{+}p_j,\lambda_{-}p_j:0\leq j\leq d-1\}$. Because $\lambda_{+}\geq\lambda_{-}$ and $p_0\geq p_j$ for every $j$, the largest one is $\lambda_{+}p_0=\frac{1+s}{2}p_0$.
The largest Schmidt probability of the output state is $\frac{1}{2}$.  
\Cref{eq_nielsen} therefore gives
\begin{align} \label{eq_converse_largest_schmidt_bound}
\frac{1+s}{2}p_0\leq\frac{1}{2} \quad \Longleftrightarrow \quad  p_0\leq\frac{1}{1+s}=q \, .
\end{align}

It remains to turn this into an entropy bound.  Pad probability vectors with zeros when necessary, and define $r:=(q,1-q,0,0,\ldots)$. This vector is decreasing because $q\geq1/2$.  We now directly check that the resource Schmidt vector $p=(p_0,p_1,\ldots)$ is majorized by $r$.  
For the first partial sum, \cref{eq_converse_largest_schmidt_bound} gives
\begin{align}
p_0\leq q=r_0.
\end{align}
For every partial sum containing at least two terms, we have for any $k \geq 2$
\begin{align}
\sum_{j=0}^{k-1}p_j
  \leq1
  =q+(1-q)
  =\sum_{j=0}^{k-1}r_j \, ,
\end{align}
and the total sums of both vectors are equal to one.  Hence $p$ is majorized
by $r$.  Since Shannon entropy is Schur concave~\cite[Section~3.D.1]{majorization_book}, majorization gives
\begin{align}
H(A')_{\psi}
=H(p)
\geq H(r)
=h(q)
=h\left(\frac{1}{1+\sin(\phi/2)}\right) \, .
\end{align}
\qed

%%%%%%%%%%%%%%%%%%%%%%%%%%%%%%%%%%%%%%%%%%%%%%%%%%%%%%%%%%%%%%%%%%%%%%%%%%%%%%%%%%
%%%%%%%%%%%%%%%%%%%%%%%%%%%%%%%%%%%%%%%%%%%%%%%%%%%%%%%%%%%%%%%%%%%%%%%%%%%%%%%%%%%
%%%%%%%%%%%%%%%%%%%%%%%%%%%%%%%%%%%%%%%%%%%%%%%%%%%%%%%%%%%%%%%%%%%%%%%%%%%%%%%%%%%
\section{Application: distributed quantum Fourier transform} \label{sec_QFT}
We illustrate the advantage of the novel teleportation protocol for controlled-phase gates with a distributed implementation of the QFT on $2n$ qubits, with Alice holding the first $n$ qubits and Bob the last $n$ qubits.  We write
\begin{align}
 \mathrm{QFT}_{2n}\ket{x}
 :=\frac{1}{2^n}\sum_{y=0}^{2^{2n}-1} \ee^{2\pi\ci xy/2^{2n}}\ket{y}\, ,
 \label{eq_qft_definition}
\end{align}
for $x\in\{0,\ldots,2^{2n}-1\}$.  Let $R_{2n}$ denote the reversal
permutation
\begin{align}
 R_{2n}\ket{y_1,\cdots, y_{2n}}
 :=\ket{y_{2n},\cdots, y_1}.
 \label{eq_qft_reversal}
\end{align}
The usual Hadamard--controlled-phase circuit implements this unitary up to a
final reversal of the order of the output qubits~\cite[Section~5]{nielsenChuang_book}.  We omit this reversal
and denote the resulting unitary by $\mathrm{QFT'}_{2n}$. 
Let 
\begin{align}
 e_{\mathrm{tel}}(\phi)
 :=\min\left\{1, h \Big(\frac{1}{1 + \sin(\phi/2)} \Big) + \frac{\sin(\phi/2)}{1+\sin(\phi/2)}\right\},
 \label{eq_qft_gate_cost}
\end{align}
denote the ebit consumption of the teleportation protocol for $U_{\phi}$ given in~\cref{fig_new_teleportation}. Thus, according to~\cref{eq_summary_results} we have $T_C(U_\phi)\leq e_{\mathrm{tel}}(\phi)$. Exact gate-teleportation protocols may be concatenated, and their pure resource states may be tensorized.  Since entanglement entropy is additive on tensor products, we have
\begin{align} \label{eq_subadditivity}
T_C(VU)\leq T_C(V)+T_C(U)
\end{align}
for any two unitaries $V$ and $U$.

In the standard QFT circuit~\cite[Section~5]{nielsenChuang_book}, every pair $j<k$ is acted upon by one controlled-phase gate $U_{\phi_d}$ for $d:=k-j$ and  $\phi_d:=\frac{2\pi}{2^{d+1}}=\frac{\pi}{2^d}$.
All Hadamard gates and all controlled-phase gates whose two qubits are held by
the same party are local and therefore free.  For a fixed $d$\footnote{As defined above, $d$ denotes the difference between the labels of two qubits.}, the
number of controlled-phase gates crossing the Alice--Bob cut is
\begin{align}
 \mu_n(d)
:=\left|\left\{(j,k):1\leq j\leq n<k\leq2n,\ k-j=d\right\}\right|
=\begin{cases}
d & 1 \leq d \leq n \\
2n-d & n <d \leq 2n -1 \, ,
\end{cases} 
 \label{eq_qft_multiplicity}
\end{align}
where $j$ and $k$ label one of Alice's and Bob's qubits, respectively. 
Indeed, $\mu_n(d)=d$ for $1\leq d\leq n$ and
$\mu_n(d)=2n-d$ for $n<d\leq2n-1$.  In particular,
\begin{align}
 \sum_{d=1}^{2n-1}\mu_n(d)=n^2,
 \label{eq_qft_number_cross_gates}
\end{align}
since each of Alice's $n$ qubits interacts once with each of Bob's $n$ qubits.

Applying the new teleportation protocol (see~\cref{fig_new_teleportation}) independently to these nonlocal gates
gives
\begin{align}
 T_C\left(\mathrm{QFT'}_{2n}\right)
 \overset{\textnormal{\Cshref{eq_subadditivity}}}{\leq} 
 \sum_{d=1}^{2n-1} \mu_n(d) e_{\mathrm{tel}}\left(\frac{\pi}{2^d}\right)
 =: E^{(n)}_{\mathrm{QFT'}_{2n}} \, .
 \label{eq_qft_exact_cost}
\end{align}
Clearly $n \mapsto E^{(n)}_{\mathrm{QFT'}_{2n}}$ is monotonically increasing in $n$. To bound the teleportation cost, recall that $\mu_n(d) \leq d$ and hence
\begin{align}
E^{(n)}_{\mathrm{QFT'}_{2n}}
\leq \lim_{n \to \infty} E^{(n)}_{\mathrm{QFT'}_{2n}}
\overset{\textnormal{\Cshref{eq_qft_exact_cost}}}{\leq} \sum_{d=1}^\infty d \,  e_{\mathrm{tel}}\left(\frac{\pi}{2^d}\right)
\overset{\textnormal{\Cshref{eq_qft_gate_cost}}}{\approx} 12.1869 \, \textnormal{ebits} \, .
\end{align}
For comparison, the construction that applies the previous one-ebit protocol from~\cite[Theorem~2]{Eisert00} separately to all nonlocal controlled unitaries uses $n^2$ ebits. 
If we allow for an error $\eps>0$, the approximate QFT~\cite{QFT94} can be implemented using the one-ebit protocol from~\cite[Theorem~2]{Eisert00} at a cost of $O(\log^2(n/\eps))$ ebits. 

%%%%%%%%%%%%%%%%%%%%%%%%%%%%%%%%%%%%%%%%%%%%%%%%%%%%%%%%%%%%%%%%%%%%%%%%%%%%%%%%%%%
%%%%%%%%%%%%%%%%%%%%%%%%%%%%%%%%%%%%%%%%%%%%%%%%%%%%%%%%%%%%%%%%%%%%%%%%%%%%%%%%%%%
\paragraph{Acknowledgments} We thank Shao-Hua Hu, Christophe Piveteau, and Jun-Yi Wu for useful discussions on gate teleportation. Furthermore, we acknowledge GPT 5.6 Sol for assistance in the development of our protocol and the tighter version of the converse (presented in~\cref{app_improved_converse}). L.S. acknowledges support from the Quantum Center at ETH Zurich.
%%%%%%%%%%%%%%%%%%%%%%%%%%%%%%%%%%%%%%%%%%%%%%%%%%%%%%%%%%%%%%%%%%%%%%%%%%%%%%%%%%%
%%%%%%%%%%%%%%%%%%%%%%%%%%%%%%%%%%%%%%%%%%%%%%%%%%%%%%%%%%%%%%%%%%%%%%%%%%%%%%%%%%%

\appendix
\section{How to choose the phases in round 2}
\label{app_phases}
The phases $d_{i_x}$ and $d_{j_x}$ in round $2$ of the teleportation protocol are solutions to
\begin{align}\label{eq_circle}
    \ee^{\ci\phi}\Big(\frac{1}{2}-p_{i_x}+p_{i_x} d_{i_x}\Big) = \frac{1}{2} - p_{j_x}  + p_{j_x} d_{j_x} \, .
\end{align}
This is an equation about circles intersecting in the complex plane and it arises from a unitarity condition in Alice's round 3 (see \cref{ssec:correct}). Note that~\cref{eq_circle} has a solution when the circles intersect. This is the case when the distance between their centers
$
    \delta = | \ee^{\ci \phi}\left(1/2-p_{i_x}\right) - \left(1/2-p_{j_x}\right)|
$
satisfies
\[
    |p_{i_x} - p_{j_x}| \leq \delta \leq p_{i_x} + p_{j_x} .
\]
Since 
\begin{equation}
    \delta^2 = (p_{i_x} - p_{j_x})^2 + 4\Big(\frac{1}{2}-p_{i_x}\Big) \Big(\frac{1}{2}-p_{j_x}\Big)\sin^2(\phi/2) \, ,
\end{equation}
the lower inequality follows automatically, and the upper inequality gives the feasibility condition presented in~\cref{eq_feasibility}.
\begin{equation}
    \sin^2(\phi/2)(1-2p_1)(1-2p_2)\leq 4p_1p_2 \, .
\end{equation}
Any rank-3 resource state that satisfies this condition can be used with our protocol to implement $U_\phi$. 
If the feasibility condition is satisfied, let us define
\begin{align}
    u := \frac{ (1/2-p_{j_x}) -  \ee^{\ci \phi}(1/2-p_{i_x}) }{\delta}\, , \quad  
    \ell := \frac{p_{i_x}^2- p_{j_x}^2+\delta^2}{2\delta}\,  \quad \textnormal{and} \quad
    h := \sqrt{p_{i_x}^2-\ell^2} \, .
\end{align}
An intersection point is then given by $v = \ee^{\ci\phi}(\frac{1}{2}-p_{i_x}) + (\ell + \ci h)u$ and we find
\begin{align}
    d_{i_x} = \frac{\ee^{-\ci \phi}v - (1/2-p_{i_x})}{p_{i_x}} \quad \textnormal{and} \quad d_{j_x} = \frac{v - (1/2-p_{j_x})}{p_{j_x}} \, .
\end{align}

\section{A tighter converse} \label{app_improved_converse}
In~\cref{sec_pf_converse}, we have seen that the largest Schmidt probability of a resource state has to fulfill $p_0\leq q := 1/(1+\sin(\phi/2))$. 
It is possible to tighten this converse if we constrain the resource state to have a fixed Schmidt rank $D \geq 3$.
In particular we will see that for every resource state with finite Schmidt rank, there are angles $\phi$ such that $U_{\phi}$ requires at least one ebit.
\Cref{fig_results_curve_stronger_converse} plots the improved converse derived in~\cref{eq_stronger_converse}.
\begin{figure}[!htb]
\centering
\input{Fig_results_stronger_converse}
\caption{Entanglement of the resource state for teleporting a controlled-phase gate $U_{\phi}=\diag(1,1,1,\ee^{\ci \phi})$ of angle $\phi \in [0,\pi]$. The achievability curve corresponds to the protocol in~\cref{fig_new_teleportation}. The converse curves show~\cref{eq_stronger_converse} for resource states of Schmidt rank $D\in\{3,4,10,\infty\}$. For such a state, gate teleportation below the corresponding curve is impossible.}
\label{fig_results_curve_stronger_converse}
\end{figure}

To prove the stronger converse bound, we start with the following simple facts about entropy.
\begin{lemma}\label{lem_tailbound}
Let $p = (p_0,p_1,\dots,p_{n-1}) \in \R_+^n$ be a probability distribution with $p_0 \geq p_1 \geq \ldots p_{n-1}$, $r = 1- p_0$, and $w_i = p_i/r$ for all $i \in \{1,\ldots,n-1\}$. Then
\begin{enumerate}[(i)]
\item $ H(p) = h(p_0) + rH(w) $ and for $\tau = \sum_{i=2}^{n-1} p_i$ we have $H(p) \geq h(p_0) + rh(\tau/r)$ \label{it_entropy1}
\item if $p_i \leq \frac{1}{2}$ for all $i$ we have $H(p) \geq 1$. \label{it_entropy2}
\end{enumerate}
\end{lemma}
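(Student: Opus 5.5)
For part~\ref{it_entropy1} we use the chain rule (grouping property) of Shannon entropy. Split the outcome of $p$ into ``index $0$'' versus ``index $\geq 1$''. The first has distribution $(p_0,r)$, and conditioned on the second event the law is $w$. This gives $H(p)=h(p_0)+rH(w)$ directly. One can also verify it algebraically by writing $-\sum_{i\geq1}p_i\log p_i=-\sum_{i\geq 1} r w_i(\log r+\log w_i)=-r\log r+rH(w)$. The case $r=0$ is trivial with the convention $0\cdot H(w)=0$.

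For the inequality we apply the grouping property once more, now to $w$. Merge the indices $\{2,\ldots,n-1\}$ into one symbol. The coarse-grained distribution is $(w_1,\,1-w_1)=(1-\tau/r,\ \tau/r)$, and $H(w)\geq h(\tau/r)$ because coarse-graining cannot increase entropy. This follows from the same chain rule, since the discarded conditional term is nonnegative. Multiplying by $r\geq 0$ and inserting into the identity yields $H(p)\geq h(p_0)+r\,h(\tau/r)$.

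For part~\ref{it_entropy2} we use that $\max_i p_i\leq 1/2$ gives $-\log p_i\geq 1$ for every $i$ with $p_i>0$. Hence $H(p)=\sum_i p_i(-\log p_i)\geq\sum_i p_i=1$. As an alternative in the spirit of the converse proof, one can note that $p\preceq(1/2,1/2,0,\ldots)$: the first partial sum is $p_0\leq 1/2$ and all further partial sums are at most $1$. Schur concavity then gives $H(p)\geq 1$. The direct pointwise bound is shorter, and we will use it.

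None of these steps is a real obstacle. The only care needed is with the degenerate cases $r=0$ or $\tau=0$, handled by the convention $0\log 0=0$, so that $h(\tau/r)$ is only evaluated when $r>0$. The ordering assumption $p_0\geq p_1\geq\ldots$ is not needed for the inequalities themselves. It only ensures that $p_0$ is the largest entry, which is the form in which the lemma is applied later. In particular, it justifies the earlier remark in the protocol section that $p_0<1/2$ forces the entropy to exceed or equal one.
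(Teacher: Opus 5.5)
Your proposal is correct and follows the paper's proof essentially step by step: the identity $H(p)=h(p_0)+rH(w)$ via the same expansion $-\sum_{i\geq 1} r w_i(\log r+\log w_i)$, the inequality by grouping $w_2,\dots,w_{n-1}$ and dropping the nonnegative conditional term $\frac{\tau}{r}H(v)$, and part (ii) by the pointwise bound $-\log p_i\geq 1$. Your handling of the degenerate cases $r=0$ and $\tau=0$, which the paper leaves implicit, is a small but welcome addition.
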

\begin{proof}
We start by proving the statement~\eqref{it_entropy1}.
Note that $1-p_0 = \sum_{i=1}^{n-1} p_i$ and therefore $w=(w_1,\ldots,w_{n-1}) \in \R_+^{n-1}$ is a probability distribution. Thus, we have
\begin{align}
H(p)
&= - p_0 \log p_0 - \sum_{i=1}^{n-1} p_i \log p_i \\
&=- p_0 \log p_0 - \sum_{i=1}^{n-1} (r w_i) \log (r w_i) \\
&=- p_0 \log p_0 - r\log r\sum_{i=1}^{n-1} w_i -r\sum_{i=1}^{n-1} w_i  \log w_i \\
&= h(p_0) + r H(w) \, .
\end{align}
The second statement of~\eqref{it_entropy1} follows from the first by grouping $w_2,w_3,\dots,w_{n-1}$ into one event of probability $\tau/r$. Then define the probabilities $v_i = p_i/\tau$ for $i\geq 2$. A second application of the previous calculation gives
\begin{equation}
    H(w) = h(\tau/r) + \frac{\tau}{r} H(v) \geq h(\tau/r) \, .
\end{equation}

Thus, it remains to prove~\eqref{it_entropy2}. Note that $p_i \leq \frac{1}{2}$ implies $-\log p_i \geq 1$. Therefore
\begin{align}
H(p) 
= -\sum_{i=0}^{n-1}p_i \log p_i
\geq \sum_{i=0}^{n-1} p_i
=1 \, .
\end{align}
\end{proof}

For $r=1-p_0$ let $A := \sin(\phi/2)\frac{2p_0-1}{2}$ and $B:= A(D-3)+r(D-2)$. Furthermore, let
    \begin{equation} \label{eq_tau_opt}
\tau_D(p_0) := 2\, \frac{B - \sqrt{B^2-(D-1)^2A^2}}{(D-1)^2} \, .
    \end{equation}
The heart of the converse bound is built on the following lemma. The first steps are similar to~\cite{SG11}.
\begin{lemma}\label{lem_conv_core}
    The resource state of every deterministic SEP implementation, hence every LOCC teleportation, satisfies
    \begin{equation} \label{eq_strong_converse_feasible}
        \sin(\phi/2)(2p_0-1) \leq 2\sum_{1\leq i < j}^{D-1} \sqrt{p_i p_j} \, .
    \end{equation}
\end{lemma}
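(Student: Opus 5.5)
We will follow the separable-operations approach of~\cite{SG11}. A deterministic LOCC (hence SEP) implementation of $U_\phi$ using $\ket{\psi}_{A'B'}$ consists of product Kraus operators $\{A_k\otimes B_k\}$, with $A_k: AA'\to A$ and $B_k: BB'\to B$. These satisfy
\begin{align*}
(A_k\otimes B_k)(\id_{AB}\otimes\ket{\psi}_{A'B'}) = c_k\, U_\phi
\end{align*}
for scalars $c_k$ with $\sum_k |c_k|^2=1$. The first reason for this is that the channel $\cP(\cdot,\psi)$ is the unitary channel $\cU_\phi$. The second is that every Kraus operator of a unitary channel is proportional to the unitary. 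Writing $\ket{\psi}=\sum_j\sqrt{p_j}\ket{jj}$, we set $A_{k,j}:=A_k(\id_A\otimes\ket{j}_{A'})$ and $B_{k,j}:=B_k(\id_B\otimes\ket{j}_{B'})$. The condition then reads
\begin{align*}
\sum_j p_j^{1/2} A_{k,j}\otimes B_{k,j}=c_kU_\phi .
\end{align*}
Completeness $\sum_k A_k^\dagger A_k\otimes B_k^\dagger B_k=\id$ gives, after sandwiching with $\ket{j}\!\bra{j'}$ on $A'$ and $\ket{l}\!\bra{l'}$ on $B'$, the relations
\begin{align*}
\sum_k A_{k,j}^\dagger A_{k,j'}\otimes B_{k,l}^\dagger B_{k,l'}=\delta_{jj'}\delta_{ll'}\id .
\end{align*}

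Next we contract with a witness, exactly as in the Nielsen-type converse of~\cref{sec_pf_converse}. Take input states $\ket{\alpha}$ (for instance $\ket{\eta_\phi}$) and $\ket{\beta}=U_\phi\ket{\eta_\phi}$, or more generally a suitable linear functional $X\mapsto\tr[M X]$. Apply it to both sides of the Kraus identity and sum $|\cdot|$ or $\cdot\,\bar c_k$ over $k$. The left side becomes
\begin{align*}
\sum_{j,j'}\sqrt{p_jp_{j'}}\,\sum_k \tr\big[M^\dagger(A_{k,j}\otimes B_{k,j})\big]\,\overline{\tr\big[M^\dagger(A_{k,j'}\otimes B_{k,j'})\big]} .
\end{align*}
The right side is $\sum_k|c_k|^2\,|\tr[M^\dagger U_\phi]|^2=|\tr[M^\dagger U_\phi]|^2$. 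We split the double sum into three parts: the diagonal term $j=j'=0$, the terms $j=j'\ge1$, and the off-diagonal terms. The diagonal terms are controlled by completeness via Cauchy--Schwarz, giving at most $p_j$ times the maximal overlap of a product operator with $M$. The off-diagonal terms are bounded by $\sqrt{p_jp_{j'}}$ times Cauchy--Schwarz-normalized factors. Pairs involving $j=0$ should be absorbed or killed by the choice of $M$, leaving the sum $2\sum_{1\le i<j}\sqrt{p_ip_j}$.

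The witness must be chosen so that product operators have small overlap with it, while $U_\phi$ has large overlap. The natural choice comes from the converse witness: $M$ built from $\ket{\omega_\phi}\!\bra{\eta_\phi}$ together with its "operator-Schmidt" structure. With this choice the maximal product overlap is governed by $\lambda_+=\frac{1+s}{2}$, while $U_\phi$ achieves $1$. The diagonal contribution is then at most $p_0\frac{1+s}{2}+(1-p_0)\cdot\frac{1+s}{2}$-type terms. Rearranging should produce
\begin{align*}
p_0\cdot s\text{-gap}=\tfrac{s}{2}(2p_0-1)\cdot 2\le 2\sum_{1\le i<j}\sqrt{p_ip_j} .
\end{align*}
This matches~\cref{eq_strong_converse_feasible} with $s=\sin(\phi/2)$. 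Note that for $D=2$ the right side vanishes, which forces $p_0=1/2$. This is consistent with \cref{rmk_main}(ii) and serves as a sanity check on the constants.

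The main obstacle is this middle step. We must find the precise witness $M$ and the Cauchy--Schwarz splitting such that the $j=0$ cross terms drop out and the constant is exactly $\sin(\phi/2)(2p_0-1)$ rather than a weaker one. We would first try to follow~\cite{SG11} verbatim, using the normalization $\sum_k\|A_{k,j}\otimes B_{k,j}\|$ bounds from completeness. If cross terms with $j=0$ persist, we would instead bound them by $\sqrt{p_0p_j}$ times a factor that vanishes for the chosen $M$. This exploits that the $j=0$ branch alone must act like a product operator.
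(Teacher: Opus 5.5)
\textbf{There is a genuine gap: the inequality is never actually derived.} Your setup matches the paper's starting point: product Kraus operators with $(A_k\otimes B_k)(\id_{AB}\otimes\ket{\psi})=c_kU_\phi$, $\sum_k|c_k|^2=1$, and the completeness relations. Your $D=2$ sanity check is also correct. The problems come after that.

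\begin{itemize}
\item You leave the witness $M$ and the splitting unspecified, as you acknowledge.
\item The arithmetic you sketch cannot produce the target. The expression $p_0\frac{1+s}{2}+(1-p_0)\frac{1+s}{2}=\frac{1+s}{2}$ has no $p_0$-dependence, and your last displayed line only restates the claim.
\item The route is structurally unsuited to this bound. A quantity like $\sum_k|\tr[M^\dagger X_k]|^2$ is a positive sesquilinear form. No nonzero $M$ makes it vanish on all product operators, since these span the whole operator space. So the diagonal terms $j=j'$ are nonnegative and survive; they can only be bounded. Bounding them by a maximal product overlap is the mechanism behind Nielsen-type bounds such as the elementary converse $p_0\leq q$. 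It gives no handle on a right-hand side containing only $\sqrt{p_ip_j}$ with $i,j\geq1$.
\item The $j=0$ cross terms cannot be ``killed by the choice of $M$.'' In a correct argument they are nonzero and are evaluated exactly, and they are the source of the factor $2p_0-1$.
\end{itemize}

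\textbf{The missing idea is a quadratic (not sesquilinear) functional that annihilates every rank-one term: the determinant of the diagonal matrix elements.}
\begin{itemize}
\item Because $U_\phi$ is diagonal, sandwich your Kraus identity with $\bra{a}\bra{b}\,\cdot\,\ket{a}\ket{b}$. Set $(\mathbf{e}_{k,i})_a:=\bra{a}A_{k,i}\ket{a}$ and $(\mathbf{f}_{k,i})_b:=\bra{b}B_{k,i}\ket{b}$. This gives $\sum_i\sqrt{p_i}\,\mathbf{e}_{k,i}\mathbf{f}_{k,i}^T=c_k\hat U$ with $\hat U=\begin{pmatrix}1&1\\1&\ee^{\ci\phi}\end{pmatrix}$, a sum of rank-one $2\times2$ matrices.
\item Absorb phases so that $c_k\geq0$. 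Then Cauchy--Binet and $\sum_kc_k^2=1$ give $\det\hat U=\sum_{i<j}\sqrt{p_ip_j}\sum_k\Lambda^k_{ij}$, where $\Lambda^k_{ij}:=\det(\mathbf{e}_{k,i}\,\mathbf{e}_{k,j})\det(\mathbf{f}_{k,i}\,\mathbf{f}_{k,j})$. No diagonal terms appear at all.
\item The determinant is affine along the rank-one direction $\mathbf{e}_{k,0}\mathbf{f}_{k,0}^T$. Jacobi's formula therefore gives $\sum_{j\geq1}\sqrt{p_0p_j}\,\Lambda^k_{0j}=\sqrt{p_0}\det(\hat U)\,\tr[\hat U^{-1}c_k\mathbf{e}_{k,0}\mathbf{f}_{k,0}^T]$.
\item Multiplying completeness by $\id_{AB}\otimes\bra{\psi}$ yields $\sum_kc_k(A_k\otimes B_k)=U_\phi\otimes\bra{\psi}$. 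Hence $\sum_kc_k\mathbf{e}_{k,0}\mathbf{f}_{k,0}^T=\sqrt{p_0}\hat U$.
\item Combining the last two items, the $j=0$ cross terms sum to exactly $2p_0\det\hat U$. This leaves $\sum_{1\leq i<j}\sqrt{p_ip_j}\sum_k\Lambda^k_{ij}=(1-2p_0)\det\hat U$.
\item Expand $\Lambda^k_{ij}$ into four products. Cauchy--Schwarz together with $\sum_k|(\mathbf{e}_{k,i})_a|^2|(\mathbf{f}_{k,j})_b|^2\leq1$ (from completeness) gives $\sum_k|\Lambda^k_{ij}|\leq4$.
\item With $|\det\hat U|=2\sin(\phi/2)$, this is exactly the claim.
\end{itemize}

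Your remark about weighting by $\bar c_k$ points at the right relation. It only becomes useful in combination with this determinant structure.
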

\begin{proof}
    Every LOCC protocol is included in the class of SEP channels~\cite{LOCC14}, which we can characterize via product Kraus operators $K_k = E_k \otimes F_k$ satisfying
\begin{equation} \label{eq_kraus1}
    \sum_k K_k^\dagger K_k = \id \, .
\end{equation}
Here, $E_k : AA' \mapsto A$ and $F_k :BB' \mapsto B$.
For a deterministic implementation of $U_\phi$, each of those operators has to satisfy
\begin{equation} \label{eq_kraus2}
    K_k ( \id_{AB} \otimes \ket{\psi_p}_{A'B'}) = \alpha_k U_\phi\,,
\end{equation}
where 
\begin{equation}
    \ket{\psi_p}_{A'B'} = \sum_{i=0}^{D-1} \sqrt{p_i}\ket{ii}_{A'B'}
\end{equation}
is the resource state.
The scalar $\alpha_k$ can depend on $k$, but not on the input. Since $U_\phi$ is unitary \cref{eq_kraus1,eq_kraus2} imply
\begin{equation}\label{eq_kraus3}
    \sum_k |\alpha_k|^2 = 1 \, .
\end{equation}
In the following, we will absorb the phase into each Kraus operator so that $\alpha_k \geq 0$.
Next, let us express $U_\phi$ as
\begin{equation}
    U_\phi = \sum_{a,b=0}^1 \hat{U}_{ab} \ketbra{a}{a}_A \otimes \ketbra{b}{b}_B \qquad \textnormal{with} \qquad \hat{U} = \begin{pmatrix}
        1 & 1 \\
        1 & \ee^{\ci\phi}
    \end{pmatrix} \, .
\end{equation}
Note that, $\hat{U}$ is invertible since $\det \hat{U} = \ee^{\ci \phi}-1 \neq 0$.\footnote{Without loss of generality we assume $\phi \in (0,\pi]$, since the case $\phi=0$ is trivial anyway.}
For each $k$ we use the basis of $\ket{a}_A$ and $\ket{b}_B$ to sandwich~\cref{eq_kraus2} yielding
\begin{align}
     \bra{a}_A\bra{b}_B (E_k\otimes F_k) ( \ket{a}_A\ket{b}_B \otimes \ket{\psi_p}_{A'B'}) = \alpha_k \hat{U}_{ab} \, .
\end{align}
Writing $\ket{\psi_p}_{A'B'}$ in its Schmidt decomposition, we get
\begin{equation}\label{eq_consist1}
    \sum_{i=0}^{D-1} \sqrt{p_i} \, \bra{a}_A E_k \ket{a}_A \ket{i}_{A'} \, \bra{b}_B F_k \ket{b}_B \ket{i}_{B'} = \alpha_k \hat{U}_{ab} \, .
\end{equation}
Defining coefficients 
\begin{equation}
    (e_k)_{ai}:=  \bra{a}_A E_k \ket{a}_A \ket{i}_{A'}  \qquad \textnormal{and} \qquad 
    (f_k)_{bi}:= \bra{b}_B F_k \ket{b}_B \ket{i}_{B'}
\end{equation}
we can write~\cref{eq_consist1} as
\begin{equation}\label{eq_consist2}
    \sum_{i=0}^{D-1} \sqrt{p_i} \, (e_k)_{ai} \, (f_k)_{bi} = \alpha_k \hat{U}_{ab} \, ,
\end{equation}
or if we define the diagonal matrix $\Psi = \diag(\sqrt{p_0},\sqrt{p_1},\sqrt{p_2},\dots)$ as
\begin{equation}\label{eq_matrix}
    \hat{E}_k \Psi \hat{F}_k^T = \alpha_k \hat{U} \, .
\end{equation}
Here, $\hat{E}_k, \hat{F}_k \in \mathbb{C}^{2\times D}$ and we can write them as matrices whose columns are two-dimensional vectors
\begin{equation}
    \hat{E}_k = 
    \begin{pmatrix}| & &| \\
        \mathbf{e}_{k,0} & \cdots & \mathbf{e}_{k,D-1} \\
        | & &|
    \end{pmatrix} \qquad \textnormal{and} \qquad  
    \hat{F}_k = 
    \begin{pmatrix}| & &| \\
        \mathbf{f}_{k,0} & \cdots & \mathbf{f}_{k,D-1} \\
        | & &|
    \end{pmatrix} \, .
\end{equation}
With this, we can write~\cref{eq_matrix} as
\begin{equation}\label{eq_consist1_vec}
    \sum_{i=0}^{D-1} \sqrt{p_i} \, \mathbf{e}_{k,i}\mathbf{f}_{k,i}^T = \alpha_k \hat{U} \, .
\end{equation}

Let us now take the determinant of both sides of~\cref{eq_matrix}. This gives
\begin{equation}
    \det (\alpha_k \hat{U}) = \alpha_k^2 \det \hat{U} 
\end{equation}
and on the other side using Binet-Cauchy~\cite[Section~0.8.7]{Horn_Johnson_1985}
\begin{equation}
    \det ( \hat{E}_k \Psi \hat{F}_k^T) = \sum_{i<j} \sqrt{p_i p_j}  \det(\mathbf{e}_{k,i}\, \mathbf{e}_{k,j}) \det(\mathbf{f}_{k,i}\, \mathbf{f}_{k,j}) \, .
\end{equation}
For sake of notation, define 
\begin{equation}
    \Lambda_{ij}^k :=\det(\mathbf{e}_{k,i}\, \mathbf{e}_{k,j}) \det(\mathbf{f}_{k,i}\, \mathbf{f}_{k,j}) 
\end{equation}
If we now sum over $k$, we get
\begin{equation}\label{eq_conv_det1}
    \sum_k \alpha_k^2 \det \hat{U}  \overset{\textnormal{\Cshref{eq_kraus3}}}{=}  \det \hat{U} = \sum_{i<j} \sqrt{p_i p_j} \, \sum_k \Lambda_{ij}^k \, .
\end{equation}

Starting from~\cref{eq_consist1_vec}, we can also define the quantity $M(t)$
\begin{equation}
M(t) 
:= t  \sqrt{p_0} \, \mathbf{e}_{k,0}\mathbf{f}_{k,0}^T +\sum_{i=1}^{D-1} \sqrt{p_i} \mathbf{e}_{k,i}\mathbf{f}_{k,i}^T \, ,
\end{equation}
which for $t=1$ again gives~\cref{eq_consist1_vec}. Calculating the determinant of this expression then gives analogously as above
\begin{equation}
    \det M(t) = t \sum_{j=1}^{D-1}
 \sqrt{p_0 p_j}\, \Lambda^k_{0j} + \sum_{1\leq i < j \leq D-1} \sqrt{p_i p_j}\,  \Lambda^k_{ij} \, .
\end{equation}
For $2\times 2$ matrices, we have
\begin{equation}
    \frac{\di }{\di t} \det M(t) = \tr[\adj(M(t))M'(t)]
\end{equation}
where $\adj(M)$ denotes the adjugate of $M$.
Evaluating this expression at $t=1$ with $M'(t)= \sqrt{p_0} \, \mathbf{e}_{k,0}\mathbf{f}_{k,0}^T $
gives
\begin{equation}\label{eq_t0_equation}
    \sum_{j=1}^{D-1} \sqrt{p_0 p_j}\, \Lambda^k_{0j} =\sqrt{p_0} \, \tr[\adj(\alpha_k \hat{U})  \mathbf{e}_{k,0}\mathbf{f}_{k,0}^T]
\end{equation}
For $2\times 2$ matrices, the adjugate satisfies $\adj(\alpha_k \hat{U}) = \alpha_k  \adj( \hat{U}) = \alpha_k \det(\hat{U}) \hat{U}^{-1}$ and, therefore, we get
\begin{equation}\label{eq_conv_det}
     \sum_{j=1}^{D-1}\sqrt{p_0 p_j}\, \Lambda^k_{0j} = \sqrt{p_0} \det (\hat{U}) \tr[\hat{U}^{-1} \alpha_k\mathbf{e}_{k,0}\mathbf{f}_{k,0}^T]\, .
\end{equation}
To proceed, we return to~\cref{eq_kraus1}.
Multiply it on the left by $\id_{AB} \otimes \bra{\psi_p}_{A'B'}$ to get
\begin{align}
\id_{AB} \otimes \bra{\psi_p}_{A'B'} 
\overset{\textnormal{\Cshref{eq_kraus1}}}{=} \sum_k \id_{AB} \otimes \bra{\psi_p}_{A'B'} K_k^\dagger K_k 
\overset{\textnormal{\Cshref{eq_consist1}}}{=} \sum_k \alpha_k U^\dagger_\phi K_k  \, .
\end{align}
Another multiplication with $U_\phi$ gives
\begin{equation}
     \sum_k \alpha_k  K_k  = (U_\phi)_{AB} \otimes \bra{\psi_p}_{A'B'} \, .
\end{equation}
Now we sandwich this equation with basis vectors $\ket{a}_A,\ket{b}_B,\ket{ij}_{A'B'}$ again to obtain
\begin{align}
    \sum_k \alpha_k \, \bra{a}_A\bra{b}_B \, K_k \, ( \ket{a}_A\ket{b}_B \otimes \ket{ij}_{A'B'}) &= \bra{a}_A\bra{b}_B(U_\phi)_{AB} \ket{a}_A\ket{b}_B  \,\braket{\psi_p|ij}_{A'B'} \, ,
\end{align}  
which equivalently can be expressed as
\begin{align}
    \sum_k \alpha_k (e_k)_{ai} (f_k)_{bj} &= \hat{U}_{ab} \sqrt{p_i} \, \delta_{ij}\, .\label{eq_completeness}
\end{align}
Choosing $i=j=0$, this gives
\begin{equation}
    \sum_k \alpha_k (e_k)_{a0} (f_k)_{b0} = \sqrt{p_0} \hat{U}_{ab} 
\end{equation}
or as matrices with the vector notation from before
\begin{equation}\label{eq_converse3}
    \sum_k \alpha_k  \,\mathbf{e}_{k,0} \mathbf{f}_{k,0}^T = \sqrt{p_0} \hat{U} \, .
\end{equation}
Going back to~\cref{eq_conv_det}, we note that summing over $k$ gives
\begin{align}
    \sum_{j=1}^{D-1}\sqrt{p_0 p_j}\, \sum_k\Lambda^k_{0j} &=  \sqrt{p_0} \det \hat{U} \, \tr[\hat{U}^{-1}\sum_k\alpha_k\mathbf{e}_{k,0} \mathbf{f}_{k,0}^T] \\
    \overset{\textnormal{\Cshref{eq_converse3}}}&{=} \sqrt{p_0} \det \hat{U}  \,\tr[\hat{U}^{-1} \, \sqrt{p_0} \hat{U} ]  \\
    &= p_0 \det \hat U \tr[\id_2] \\
    &= 2 p_0 \det \hat{U}\, .
\end{align}
Subtracting this term from~\cref{eq_conv_det1} then gives
\begin{equation}\label{eq_conv_det3}
    \sum_{1\leq i< j}^{D-1} \sqrt{p_ip_j}\sum_k \Lambda_{ij}^k = (1-2p_0) \det \hat U \, .
\end{equation}
To finish our proof, we need to bound $\sum_k \Lambda_{ij}^k $.
To do so, we expand it into scalar terms
\begin{align}\label{eq_conv_lambda}
     \Lambda_{ij}^k &= \det(\mathbf{e}_{k,i}\, \mathbf{e}_{k,j}) \det(\mathbf{f}_{k,i}\, \mathbf{f}_{k,j}) \\
     &= ((e_k)_{0i}  \,(f_k)_{0i})  ((e_k)_{1j} \, (f_k)_{1j}) -  ((e_k)_{0i}  \,(f_k)_{1i})  ((e_k)_{1j} \, (f_k)_{0j}) \\
     & \qquad -((e_k)_{1i}  \,(f_k)_{0i})  ((e_k)_{0j} \, (f_k)_{1j}) + ((e_k)_{1i}  \,(f_k)_{1i})  ((e_k)_{0j} \, (f_k)_{0j}) \, .
\end{align}
Furthermore, we note that~\cref{eq_kraus1} implies
\begin{equation}
    \sum_k \norm{E_k \ket{ai}_{AA'}}^2 \norm{F_k \ket{bj}_{BB'}}^2 = 1 
\end{equation}
by multiplying $\ket{ai}_{AA'}\ket{bj}_{BB'}$ from both sides.
Projecting $E_k \ket{ai}_{AA'}$ onto $\ket{a}_A$ and respectively with  $F_k \ket{bj}_{BB'}$ onto $\ket{b}_B$ can only decrease their norms, so this implies
\begin{equation}\label{eq_conv_sumbound}
    \sum_k |(e_k)_{ai}|^2 |(f_k)_{bj}|^2 \leq 1 \, .
\end{equation}
Using Cauchy-Schwarz and~\cref{eq_conv_sumbound}, we can bound the first term in~\cref{eq_conv_lambda} by
\begin{equation}
    \sum_k |((e_k)_{0i}  \,(f_k)_{0i})  ((e_k)_{1j} \, (f_k)_{1j})|
    \leq \sqrt{\sum_k |(e_k)_{0i}|^2 |(f_k)_{0i}|^2} \sqrt{\sum_k |(e_k)_{1j}|^2 |(f_k)_{1j}|^2} \leq 1 \, .
\end{equation}
The same argument also applies to the other three terms. Therefore,
\begin{equation} \label{eq_Lambda_bound}
    \sum_k |\Lambda^k_{ij}| \leq 4 .
\end{equation}
Now using the triangle inequality we get from~\cref{eq_conv_det3}
\begin{equation}
    (2p_0-1) \, |\det \hat U|  \leq \sum_{1\leq i<j}^{D-1} \sqrt{p_i p_j} \sum_k |\Lambda^k_{ij}| \overset{\textnormal{\Cshref{eq_Lambda_bound}}}{\leq} 4\sum_{1\leq i<j}^{D-1}  \sqrt{p_i p_j} \, .
\end{equation}
Together with $|\det \hat{U}| = 2 \sin(\phi/2)$, this completes the proof.
\end{proof}

\begin{lemma}\label{lem_cauchy} Let $p = (p_0,p_1,\dots,p_{D-1}) \in \R_+^D$ be a probability distribution with $p_0 \geq p_1 \geq \ldots p_{D-1}$, $r = 1- p_0$, and $\tau = \sum_{i=2}^{D-1}p_i$. Then
    \begin{equation}
        \sum_{1\leq i< j}^{D-1} \sqrt{p_i p_j}  \leq  \sqrt{(r-\tau)(D-2)\tau} + \frac{D-3}{2}\tau =: f(\tau) \, .
    \end{equation}
    Moreover, $0 \leq \tau \leq \frac{(D-2)r}{D-1}$ and $f$ is increasing on this interval.
\end{lemma}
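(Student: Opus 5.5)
We split the pair sum according to whether the index $1$ is involved:
\begin{equation*}
\sum_{1\leq i<j}^{D-1}\sqrt{p_ip_j}=\sqrt{p_1}\sum_{j=2}^{D-1}\sqrt{p_j}+\sum_{2\leq i<j}^{D-1}\sqrt{p_ip_j}\, ,
\end{equation*}
and we use $p_1 = r-\tau$, which holds because $p_1+\tau = \sum_{i\geq1}p_i = r$. Each part is then bounded by Cauchy--Schwarz.

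For the first part, Cauchy--Schwarz over the $D-2$ indices $j\in\{2,\dots,D-1\}$ gives $\sum_{j\geq2}\sqrt{p_j}\leq\sqrt{(D-2)\tau}$, so this part is at most $\sqrt{(r-\tau)(D-2)\tau}$. For the second part, we write
\begin{equation*}
2\sum_{2\leq i<j}\sqrt{p_ip_j}=\Big(\sum_{j\geq2}\sqrt{p_j}\Big)^2-\tau\leq (D-2)\tau-\tau=(D-3)\tau\, ,
\end{equation*}
again by Cauchy--Schwarz. (Equivalently, apply AM--GM $\sqrt{p_ip_j}\leq (p_i+p_j)/2$ and note that each $p_i$ appears in $D-3$ pairs.) Adding the two bounds yields exactly $f(\tau)$.

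Next, the range of $\tau$. Clearly $\tau\geq 0$. By the ordering, $p_1\geq p_j$ for every $j\geq2$, so $\tau\leq (D-2)p_1=(D-2)(r-\tau)$. Rearranging gives $\tau\leq\frac{(D-2)r}{D-1}$.

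The remaining step, and the only mildly delicate one, is monotonicity of $f$ on this interval. We differentiate $g(\tau)=\sqrt{(D-2)(r\tau-\tau^2)}$ to get
\begin{equation*}
g'(\tau)=\frac{\sqrt{D-2}\,(r-2\tau)}{2\sqrt{r\tau-\tau^2}}\, ,
\end{equation*}
and $f'=g'+\frac{D-3}{2}$. For $\tau\leq r/2$ both terms are nonnegative, so $f'\geq0$. For $\tau>r/2$ the inequality $f'\geq0$ is equivalent to
\begin{equation*}
\sqrt{D-2}\,(2\tau-r)\leq (D-3)\sqrt{\tau(r-\tau)}\, .
\end{equation*}
Since the left-hand side is increasing and the right-hand side is decreasing for $\tau>r/2$, it suffices to check this at the right endpoint $\tau=\frac{(D-2)r}{D-1}$. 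There $2\tau-r=\frac{(D-3)r}{D-1}$ and $\tau(r-\tau)=\frac{(D-2)r^2}{(D-1)^2}$, so both sides equal $\frac{(D-3)\sqrt{D-2}\,r}{D-1}$. Hence the inequality holds with equality at the endpoint and strictly before it, so $f$ is increasing on the whole interval. The case $D=3$ is consistent with this: the interval is $[0,r/2]$, and the first case alone covers it.
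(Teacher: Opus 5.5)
Your proof is correct and follows the paper's argument essentially step by step. It uses the same splitting off of the terms involving $p_1$, the same two Cauchy--Schwarz bounds combined with $p_1=r-\tau$, and the same ordering argument $\tau\leq(D-2)(r-\tau)$ for the range of $\tau$. The one difference is that you actually carry out the monotonicity check, via the endpoint comparison for $\tau>r/2$ where both sides equal $\frac{(D-3)\sqrt{D-2}\,r}{D-1}$. The paper only asserts this step with ``one can check that $f'(\tau)\geq0$''.
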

\begin{proof}
    Separating the $p_1$ term, we get
    \begin{equation} \label{eq_lem_cauchy1}
         \sum_{1\leq i< j}^{D-1} \sqrt{p_i p_j} =  \sqrt{p_1} \sum_{i=2}^{D-1} \sqrt{p_i} +  \sum_{2\leq i< j}^{D-1} \sqrt{p_i p_j} \, .
    \end{equation}
    Cauchy-Schwarz then gives
    \begin{equation} \label{eq_up_tau_1}
        \sum_{i=2}^{D-1} \sqrt{p_i} \leq \sqrt{(D-2)\sum_{i=2}^{D-1} p_i } = \sqrt{(D-2)\tau} \, .
    \end{equation}
    Consider now 
    \begin{equation}
        \left( \sum_{i=2}^{D-1} \sqrt{p_i} \right)^2 = \sum_{i=2}^{D-1} p_i + 2 \sum_{2\leq i < j} \sqrt{p_i p_j} = \tau + 2 \sum_{2\leq i < j} \sqrt{p_i p_j} \, .
    \end{equation}
    Using Cauchy-Schwarz on the first term and rearranging gives us
    \begin{equation}
        \sum_{2\leq i < j} \sqrt{p_i p_j} \leq  \frac{D-3}{2} \tau\, . \label{eq_up_tau_2}
    \end{equation}
    Together, noting that $p_1 = r-\tau$, we can rewrite~\cref{eq_lem_cauchy1} as
    \begin{equation}
         \sum_{1\leq i< j}^{D-1} \sqrt{p_i p_j}  
         \overset{\textnormal{\Cshref{eq_lem_cauchy1,eq_up_tau_1,eq_up_tau_2}}}{\leq}  \sqrt{(r-\tau)(D-2)\tau} + \frac{D-3}{2}\tau \, .
    \end{equation}

Since $p_i \leq p_1 = r-\tau $ for all $i\geq2$, this implies $\tau \leq (D-2)(r-\tau)$, which rearranges to
$\tau \leq \frac{(D-2)r}{D-1}$. Taking the derivative of the defined function $f$, one can check that $f'(\tau)\geq0$ on the interval $\tau\in[0,\frac{(D-2)r}{D-1}]$.
\end{proof}

\begin{lemma}\label{lem_taubound}
Let $p \in \R_+^D$ be a resource spectrum satisfying~\cref{eq_strong_converse_feasible} with $1/2 \leq p_0 \leq q$, $r = 1- p_0$, and $\tau = \sum_{i=2}^{D-1}p_i$. Then,
    \begin{equation} \label{eq_bound_tau}
        \tau \geq \tau_D(p_0)\, , 
    \end{equation}
    where $\tau_D(p_0)$ is defined in~\cref{eq_tau_opt}.
\end{lemma}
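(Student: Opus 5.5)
Combining \cref{lem_conv_core} with \cref{lem_cauchy}, every feasible spectrum satisfies
\begin{equation*}
A=\sin(\phi/2)\frac{2p_0-1}{2}\leq f(\tau)=\sqrt{(r-\tau)(D-2)\tau}+\frac{D-3}{2}\tau \, ,
\end{equation*}
with $\tau\in[0,\frac{(D-2)r}{D-1}]$. The plan is to read this as a lower bound on $\tau$. Since $f$ is increasing on the admissible interval (\cref{lem_cauchy}), the set of admissible $\tau$ with $f(\tau)\geq A$ is an interval $[\tau^*,\frac{(D-2)r}{D-1}]$. Here $\tau^*$ is the smallest root of $f(\tau)=A$, and the claim reduces to showing $\tau^*=\tau_D(p_0)$.

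To compute $\tau^*$, I would isolate the square root:
\begin{equation*}
\sqrt{(D-2)\tau(r-\tau)}=A-\tfrac{D-3}{2}\tau ,
\end{equation*}
which is valid whenever the right-hand side is nonnegative. Squaring gives
\begin{equation*}
(D-2)(r\tau-\tau^2)=A^2-(D-3)A\tau+\tfrac{(D-3)^2}{4}\tau^2 .
\end{equation*}
Since $\frac{(D-3)^2}{4}+(D-2)=\frac{(D-1)^2}{4}$, this becomes
\begin{equation*}
\tfrac{(D-1)^2}{4}\tau^2-\big((D-2)r+(D-3)A\big)\tau+A^2=0 ,
\end{equation*}
whose linear coefficient is exactly $B$. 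Its smaller root is
\begin{equation*}
\tau=\frac{2\big(B-\sqrt{B^2-(D-1)^2A^2}\big)}{(D-1)^2}=\tau_D(p_0) ,
\end{equation*}
matching \cref{eq_tau_opt}.

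The remaining step, which I expect to be the main obstacle, is checking that this algebra is legitimate. Three things need verification.
\begin{enumerate}[(a)]
\item The discriminant is nonnegative, i.e.\ $B\geq (D-1)A$. This is equivalent to $(D-2)r\geq 2A=\sin(\phi/2)(2p_0-1)$. I would derive it from $p_0\leq q$, which gives $\sin(\phi/2)\leq (1-p_0)/p_0$, so that $2A\leq r(2p_0-1)/p_0\leq r\leq (D-2)r$ for $D\geq3$.
\item The smaller root lies in $[0,\frac{(D-2)r}{D-1}]$. Nonnegativity is clear. For the upper end I would compare with the vertex of the parabola, $2B/(D-1)^2$, and check that $f$ evaluated at the right endpoint is at least $A$. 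That value is $\frac{(D-2)r}{D-1}\cdot\frac{D-1}{2}=\frac{(D-2)r}{2}\geq A$, using (a).
\item The root is not spurious from squaring, i.e.\ $A-\frac{D-3}{2}\tau_D\geq 0$. I would check this via $\tau_D\leq 2A/(D-1)\leq 2A/(D-3)$, or directly from the product of the roots $4A^2/(D-1)^2$ together with the root ordering.
\end{enumerate}

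Finally, monotonicity of $f$ gives the conclusion. For $\tau<\tau_D(p_0)$ we get $f(\tau)<f(\tau_D(p_0))=A$, contradicting feasibility, hence $\tau\geq\tau_D(p_0)$. In the degenerate case $A=0$ (i.e.\ $p_0=1/2$ or $\phi=0$) we have $\tau_D=0$ and the bound is trivial.
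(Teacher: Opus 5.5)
Your proposal is correct and takes the same route as the paper. You combine \cref{lem_conv_core} with \cref{lem_cauchy} to get $A\le f(\tau)$, isolate the square root, square to obtain the quadratic with linear coefficient $B$, take its smaller root, and conclude by monotonicity of $f$. Your checks (a)--(c) also supply details the paper leaves implicit: it asserts that $\tau_*$ exists from the range of $f$ without verifying $A\le\frac{(D-2)r}{2}$, and it picks the smaller root only because ``it vanishes as $A\to0$.''
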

\begin{proof}
We start by noting that 
    \begin{equation}
        \sin(\phi/2)\frac{2p_0-1}{2} 
        \overset{\textnormal{\Cshref{lem_conv_core}}}{\leq} \sum_{1\leq i < j}^{D-1} \sqrt{p_i p_j}  
        \overset{\textnormal{\Cshref{lem_cauchy}}}{\leq} f(\tau) \, .
    \end{equation}
    By~\cref{lem_cauchy}, $f$ is increasing on $[0,\frac{(D-2)r}{D-1}]$ and takes values in $[f(0)=0,f(\frac{D-2}{D-1}r)=\frac{D-2}{2}r]$. So, there is one $\tau_*$ such that
    \begin{equation} \label{eq_def_A}
        f(\tau_*) = \sin(\phi/2) \frac{2p_0-1}{2} = A
    \end{equation}
    or equivalently
    \begin{equation}
        \sqrt{(r-\tau_*)(D-2)\tau_*} = A - \frac{D-3}{2} \tau_* \, .
    \end{equation}
This can be reformulated as the following quadratic equation
    \begin{align} \frac{(D-1)^2}{4}\tau_*^2 - \big(A(D-3)+r(D-2) \big)\tau_* + A^2=0 \, .
    \end{align}
    Solving it for the smaller root (it is the one that vanishes as $A\to0$) yields
    \begin{equation}
        \tau_* 
        = 2\, \frac{B - \sqrt{B^2-(D-1)^2A^2}}{(D-1)^2} 
        \overset{\textnormal{\Cshref{eq_tau_opt}}}{=} \tau_D(p_0) \, ,
    \end{equation}
    where $B = A(D-3)+r(D-2)$. Together, we have
    \begin{equation}
        f(\tau_*) \overset{\textnormal{\Cshref{eq_def_A}}}{=} A \leq f(\tau)
    \end{equation}
    and monotonicity of $f$ implies $\tau\geq \tau_*$.
\end{proof}

\begin{theorem}\label{thm_tightestconverse}
    For $1/2\leq p_0 \leq q<1$, let $\tau_D(p_0)$ be as defined in~\cref{eq_tau_opt}. Every rank-D spectrum satisfying~\cref{eq_strong_converse_feasible} fulfills
    \begin{align} \label{eq_converse_stronger}
        H(p) \geq h(p_0) + (1-p_0)h\left(\frac{\tau_D(p_0)}{1-p_0}\right) \, .
    \end{align}
\end{theorem}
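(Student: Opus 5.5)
The plan is to combine the entropy decomposition of \cref{lem_tailbound}\eqref{it_entropy1} with the lower bound on the tail mass from \cref{lem_taubound}. Write $r=1-p_0$ and $\tau=\sum_{i=2}^{D-1}p_i$. By \cref{lem_tailbound}\eqref{it_entropy1},
\begin{align}
H(p)\geq h(p_0)+r\,h(\tau/r).
\end{align}
It therefore suffices to show
\begin{align}
h(\tau/r)\geq h(\tau_D(p_0)/r).
\end{align}
If $r=0$, i.e.\ $p_0=1$, this is excluded by $q<1$, so $r>0$ throughout.

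\textbf{Step 1: bracket $\tau$.} By \cref{lem_taubound}, any spectrum satisfying \cref{eq_strong_converse_feasible} with $1/2\leq p_0\leq q$ has $\tau\geq\tau_D(p_0)$. For an upper bound, \cref{lem_cauchy} gives $\tau\leq\frac{D-2}{D-1}r$. A sharper and more useful fact is that $\tau\leq r-\tau$, i.e.\ $\tau/r\leq 1/2$, \emph{whenever} $p_1\geq\tau$. That inequality can fail, though: for large $D$ the tail can exceed $p_1$.

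\textbf{Step 2: handle the binary entropy.} The binary entropy $h$ is increasing on $[0,1/2]$ and symmetric about $1/2$. Let $x_*=\tau_D(p_0)/r$ and $x=\tau/r$. If $x_*\leq x\leq 1-x_*$, then $h(x)\geq h(x_*)$ by concavity, since the minimum of a concave function on an interval is attained at an endpoint and both endpoints give $h(x_*)$. So I need two facts: $x_*\leq 1/2$ (so the interval is nonempty), and $x\leq 1-x_*$, i.e.\ $p_1=r-\tau\geq\tau_D(p_0)$.

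For the first, at $f(\tau_*)=A$ monotonicity of $f$ gives $\tau_*\leq\tau$. To compare with $r/2$, check $f(r/2)\geq A$. Since $f(r/2)=\frac{r}{2}\sqrt{D-2}+\frac{D-3}{4}r$ and $A\leq\frac{2p_0-1}{2}$, it suffices that $r\geq\frac{2p_0-1}{2}$ up to these factors, which for $D\geq3$ should follow from $p_0\leq q\leq1$. For instance, $D=3$ gives $f(r/2)=r/2$, and we need $r/2\geq s(2p_0-1)/2$, i.e.\ $1-p_0\geq s(2p_0-1)$, which is equivalent to $p_0(1+2s)\leq 1+s$. This holds because $p_0\leq\frac{1}{1+s}\leq\frac{1+s}{1+2s}$.

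\textbf{Step 3: show $p_1\geq\tau_D(p_0)$ (main obstacle).} Here I would not use $\tau\geq\tau_*$ naively. Instead I would return to \cref{lem_conv_core} with a symmetric Cauchy--Schwarz bound. The idea is that $\sum_{1\leq i<j}\sqrt{p_ip_j}$ can be bounded in terms of $p_1$ and $r$ in the same way as in terms of $\tau$. A cleaner route is to note that the right-hand side of \cref{eq_converse_stronger} is a lower bound obtained by minimizing $h(p_0)+r\,h(\tau/r)$ over feasible $\tau$. Feasible $\tau$ lie in $[\tau_*,\frac{D-2}{D-1}r]$, and $\frac{D-2}{D-1}r\geq r/2$ for $D\geq3$. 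By concavity, the minimum of $h(\cdot/r)$ over this interval is attained at an endpoint. Thus it remains to compare
\begin{align}
h\Big(\frac{D-2}{D-1}\Big)\quad\text{with}\quad h(x_*).
\end{align}
Since $1-\frac{D-2}{D-1}=\frac{1}{D-1}$, this reduces to $x_*\leq\frac{1}{D-1}$, i.e.\ $f\big(\frac{r}{D-1}\big)\geq A$. I expect verifying this inequality uniformly in $D$ and $\phi$ to be the main difficulty. If it fails, I would fall back to \cref{lem_tailbound}\eqref{it_entropy1} with full $H(w)\geq\log$-type bounds when $\tau$ is large, since then the tail entropy is already big.
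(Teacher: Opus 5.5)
Your strategy is sound: keep only the coarse bound $H(w)\ge h(\tau/r)$ from \cref{lem_tailbound}\eqref{it_entropy1}, then show $h(\tau/r)\ge h(x_*)$ on the whole feasible range $\tau\in[\tau_D(p_0),\tfrac{D-2}{D-1}r]$ by concavity. As written, however, the proof is not finished. You reduce everything to $x_*\le\tfrac{1}{D-1}$, i.e.\ $f(\tfrac{r}{D-1})\ge A$, and then leave it as the ``main difficulty'' with only a conditional fallback. Separately, the claim in Step 2 that $r\ge\tfrac{2p_0-1}{2}$ ``should follow from $p_0\le q$'' is wrong as stated: it drops the factor $s=\sin(\phi/2)$ and fails for $p_0$ near $1$.

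The missing inequality is short. First, $f(\tfrac{r}{D-1})=\tfrac{(D-2)r}{D-1}+\tfrac{(D-3)r}{2(D-1)}=\tfrac{(3D-7)r}{2(D-1)}\ge\tfrac r2$ for $D\ge3$. Second, $A\le\tfrac r2$ is exactly your own $D=3$ computation $p_0(1+2s)\le 1+s$, which follows from $p_0\le q$. Since $f$ is strictly increasing on $[0,\tfrac{D-2}{D-1}r]$ (strictly concave with $f'=0$ at the right endpoint), this gives $\tau_D(p_0)\le\tfrac{r}{D-1}\le\tfrac r2$. Hence $x\in[x_*,1-\tfrac{1}{D-1}]\subseteq[x_*,1-x_*]$, and concavity plus symmetry of $h$ give $h(x)\ge h(x_*)$.

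The paper avoids this computation entirely by splitting at $\tau=r/2$. If $\tau\le r/2$, then $\tau_D(p_0)\le\tau\le r/2$, and monotonicity of $h$ on $[0,1/2]$ suffices. If $\tau>r/2$, then $p_1=r-\tau<r/2$, so every $w_i=p_i/r\le\tfrac12$, and \cref{lem_tailbound}\eqref{it_entropy2} gives $H(w)\ge 1\ge h(x_*)$. In other words, it uses the full tail entropy instead of its binary coarse-graining, which is precisely your fallback. That route needs neither the upper bound on $\tau$ from \cref{lem_cauchy} nor any $\phi$-dependent estimate.

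Your completed route shows the somewhat finer fact that the coarse-grained bound $h(p_0)+r\,h(\tau/r)$ alone already dominates the right-hand side for every feasible $\tau$. The price is an extra estimate that genuinely uses $p_0\le q$ once $D\ge4$.
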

\begin{proof}
    Let $r=1-p_0$ and $\tau$ be the mass outside the two largest probabilities. By~\cref{lem_taubound}, $\tau \geq \tau_D(p_0)$. 
    First, suppose $\tau \leq r/2$. Then both $\tau_D/r$ and $\tau/r$ are in $[0,1/2]$. The binary entropy increases on this interval.
    Using \cref{lem_tailbound},
    \begin{equation}
        H(p)\geq h(p_0) + rh\left(\frac{\tau}{r}\right) \geq h(p_0) + r h\left(\frac{\tau_D(p_0)}{r}\right) \, .
    \end{equation}

    Now suppose $\tau > r/2$. Then $p_1=r-\tau <r/2$ and therefore for all $i\geq2$, $p_i \leq p_1\leq r/2$. Thus, every $w_i = p_i/r$ is smaller than $1/2$. By~\cref{lem_tailbound}
     we then have $H(w)\geq 1$ and therefore
     \begin{equation}
         H(p) = h(p_0) + rH(w)\geq h(p_0) + r \, .
     \end{equation}
     Since $h(x)\leq 1$, this finishes the proof.
\end{proof}

Let $\phi \in [0,\pi]$ and $U_{\phi}=\diag(1,1,1,\ee^{\ci \phi})$ be a controlled-phase gate. Any gate teleportation protocol for $U_{\phi}$ with a resource state $\ket{\psi}_{A'B'}$ of Schmidt rank $D\geq 3$ must fulfill
\begin{align}
H(A')_\psi 
\overset{\textnormal{\Cshref{thm_tightestconverse}}}{\geq} \min_{p_0 \in [1/2,\frac{1}{1+\sin(\phi/2)}]} \Big \{  h(p_0) + (1-p_0)h\left(\frac{\tau_D(p_0)}{1-p_0}\right) \Big \} \, . \label{eq_stronger_converse}
\end{align}
Note that for a resource state with $p_0 <1/2,$~\cref{lem_tailbound} yields $H(p) \geq 1,$ which is clearly bigger than the right-hand side of~\cref{eq_stronger_converse}. 
%%%%%%%%%%%%%%%%%%%%%%%%%%%%%%%%%%%%%%%%%%%%%%%%%%%%%%%%%%%%%%%%%%%%%%%%%%%%%%%%%%%
%%%%%%%%%%%%%%%%%%%%%%%%%%%%%%%%%%%%%%%%%%%%%%%%%%%%%%%%%%%%%%%%%%%%%%%%%%%%%%%%%%%
\bibliographystyle{arxiv_no_month}
\bibliography{bibliofile}

\end{document}

%% file: gate_teleporation_scheme.tex
\begin{tikzpicture}
\def \x{0.7};
\def \xb{1.2}
\def \y{0.4};
\definecolor{blue2}{rgb}{0.2, 0.2, 0.6}

\draw[] (-0.5*\x-0.5,0) -- (2*\x,0);
\draw[] (1.5*\x,-\y) -- (2*\x,-\y);
\draw[] (1.5*\x,-2*\y) -- (2*\x,-2*\y);
\draw[] (-0.5*\x-0.5,-3*\y) -- (2*\x,-3*\y);
\draw [] (2*\x,0.5*\y) rectangle (2*\x+\xb,-3.5*\y);
\node at (2*\x+0.5*\xb,-1.5*\y) {\textnormal{LOCC}};
\node at (1.5*\x-0.6-0.1,-1.5*\y) {$\ket{\psi}_{A'B'}$};
\draw[decorate,decoration={zigzag, segment length=4, amplitude=0.9}] (1.5*\x,-\y)  -- (1.5*\x,-2*\y);
\fill[black] (1.5*\x,-\y) circle (0.3mm);
\fill[black] (1.5*\x,-2*\y) circle (0.3mm);
\draw[] (2*\x+\xb,0) -- (3*\x+\xb,0);
\draw[] (2*\x+\xb,-3*\y) -- (3*\x+\xb,-3*\y);
\draw [dashed,blue2,thick,] (-0.15-0.5,0.5*\y+0.1) rectangle (3*\x+\xb-0.3,-3.5*\y-0.1);
\node[blue2] at (1.5*\x+0.5*\xb-0.2-0.25,-3.5*\y-0.35) {$U_{AB}$};
\node at (-0.5*\x-0.2-0.5,0) {$A$};
\node at (3*\x+\xb+0.2,0) {$A$};
\node at (-0.5*\x-0.2-0.5,-3*\y) {$B$};
\node at (3*\x+\xb+0.2,-3*\y) {$B$};

\end{tikzpicture}

%% file: new_gate_teleportation_scheme.tex
\begin{tikzpicture}
\def \x{0.7};
\def \xb{0.8}
\def \xbb{1}
\def \xbbb{1.2}
\def \y{0.4};

\definecolor{blue2}{rgb}{0.2, 0.2, 0.6}

\draw[fill=gray!10,gray!10] (-1.8*\x-0.5,2*\y+0.1) rectangle (11.5*\x+\xbbb+0.15,-2*\y+0.05);
\draw[fill=ForestGreen!10,ForestGreen!10] (-1.8*\x-0.5,-5*\y-0.1) rectangle (11.5*\x+\xbbb+0.15,-2*\y-0.05);

\draw[] (-1*\x-0.1-0.5,0) -- (2*\x,0);
\node at (-1*\x-0.3-0.5,0) {$A$};
\draw[] (\x,\y) -- (2*\x,\y);
\node at (\x-0.35,\y) {$\ket{0}_{\bar A}$};
\draw[] (\x,-\y) -- (2*\x,-\y);
\draw[] (\x,-3*\y) -- (4.3*\x,-3*\y);
\draw[decorate,decoration={zigzag, segment length=4, amplitude=0.9}] (\x,-\y)  -- (\x,-3*\y);
\node at (\x-0.65-0.1,-2*\y) {$\ket{\psi_{\phi}}_{A'B'}$};
\fill[black] (\x,-\y) circle (0.3mm);
\fill[black] (\x,-3*\y) circle (0.3mm);
\draw[] (-1*\x-0.1-0.5,-4*\y) -- (4.3*\x,-4*\y);
\node at (-1*\x-0.3-0.5,-4*\y) {$B$};

\draw [] (2*\x,1.5*\y) rectangle (2*\x+\xb,-1.5*\y);
\node at (2*\x+0.5*\xb,0) {$U_{\phi,1}$};
\node[meter, scale=0.42] at (2*\x+\xb+0.4,\y) {};
\draw [] (2*\x+\xb,\y) -- (2*\x+\xb+0.18,\y);

\draw [] (4.3*\x,-2.5*\y) rectangle (4.3*\x+\xb,-4.5*\y);
\node at (4.3*\x+0.5*\xb,-3.5*\y) {$U^{(x)}_{\phi,2}$};
\draw[densely dotted] (2*\x+\xb+0.63,\y) -- (4.3*\x+0.5*\xb,\y);
\draw[densely dotted] (4.3*\x+0.5*\xb,\y) -- (4.3*\x+0.5*\xb,-2.5*\y);
\node at (2*\x+\xb+0.9,\y+0.15) {\small{$x$}};
\node[meter, scale=0.42] at (4.3*\x+\xb+0.4,-3*\y) {};
\draw (4.3*\x+\xb,-3*\y) -- (4.3*\x+\xb+0.18,-3*\y);

\draw [] (7*\x,0.5*\y) rectangle (7*\x+\xbb,-1.5*\y);
\node at (7*\x+0.5*\xbb,-0.5*\y) {$U^{(x,y)}_{\phi,3}$};
\draw[] (2*\x+\xb,0) -- (7*\x,0);
\draw[] (2*\x+\xb,-\y) -- (7*\x,-\y);
\draw[densely dotted] (4.3*\x+\xb+0.63,-3*\y) -- (7*\x+0.5*\xbb,-3*\y);
\draw[densely dotted] (7*\x+0.5*\xbb,-3*\y) -- (7*\x+0.5*\xbb,-1.5*\y);
\node at (7*\x+0.5*\xbb-0.7,-3*\y+0.14) {\small{$y$}};
\node[meter, scale=0.42] at (7*\x+\xbb+0.4,-1*\y) {};
\draw (7*\x+\xbb,-1*\y) -- (7*\x+\xbb+0.18,-1*\y);

\draw [] (10*\x,-3.5*\y+0.1) rectangle (10*\x+\xbbb,-4.5*\y-0.1);
\node at (10*\x+0.5*\xbbb,-4*\y) {$U^{(x,z)}_{\phi,4}$};
\draw[] (4.3*\x+\xb,-4*\y) -- (10*\x,-4*\y);
\draw[densely dotted] (7*\x+\xbb+0.63,-1*\y) -- (10*\x+0.5*\xbbb,-1*\y);
\draw[densely dotted] (10*\x+0.5*\xbbb,-1*\y) -- (10*\x+0.5*\xbbb,-3.5*\y+0.1);
\node at (7*\x+\xbb+0.63+0.25,-1*\y+0.12) {\small{$z$}};
\draw[] (10*\x+\xbbb,-4*\y) -- (11*\x+\xbbb,-4*\y);
\draw[] (7*\x+\xbb,0) -- (11*\x+\xbbb,0);
\node at (11*\x+\xbbb+0.15,0) {$A$};
\node at (11*\x+\xbbb+0.15,-4*\y) {$B$};

\draw [blue2,dashed,thick] (-0.75*\x-0.1-0.4,2*\y) rectangle (10.5*\x+\xbbb+0.1,-5*\y);
\node[blue2] at (5.625*\x+0.5*\xbb-0.2,-5*\y-0.5) {$U_{\phi}$};

\end{tikzpicture}

%% file: Fig_results.tex
\begin{tikzpicture}
\definecolor{blue2}{rgb}{0.2, 0.2, 0.6}
\begin{axis}[
  width=0.82\textwidth,height=6.0cm,
  xmin=0,xmax=1,ymin=0,ymax=1.04,
  xlabel={Normalized phase $\phi/\pi$},
  ylabel={Entanglement entropy (ebits)},
  grid=major,grid style={gray!18},
  legend cell align=left,
  legend style={draw=none,fill=none,at={(0.98,0.06)},anchor=south east},
  samples=240,domain=0.00001:1
]

\addplot[name path=route,thick,black]
 {min(1,-((1/(1+sin(deg(pi*x/2))))*ln(1/(1+sin(deg(pi*x/2))))
 +(sin(deg(pi*x/2))/(1+sin(deg(pi*x/2))))
 *ln(sin(deg(pi*x/2))/(1+sin(deg(pi*x/2)))))/ln(2)
 +sin(deg(pi*x/2))/(1+sin(deg(pi*x/2))))};
\addlegendentry{achievability bound (see~\cref{eq_achievablity}) }
\addplot[name path=base,thick,black,dashed]
 {-((1/(1+sin(deg(pi*x/2))))*ln(1/(1+sin(deg(pi*x/2))))
 +(sin(deg(pi*x/2))/(1+sin(deg(pi*x/2))))
 *ln(sin(deg(pi*x/2))/(1+sin(deg(pi*x/2)))))/ln(2)};
\addlegendentry{converse bound (see~\cref{eq_converse})}
\addplot[thick,dotted,blue2]
 table[col sep=comma,x=phi_over_pi,
   y=D3_entropy_lower_bound_ebits]{improved_converse.csv};
\addlegendentry{converse with Schmidt rank $3$ (see \cref{app_improved_converse})}
%\addplot[gray!11] fill between[of=base and route];
\end{axis}
\end{tikzpicture}

%% file: Teleportation_correctness.tex
\begin{tikzpicture}
\def \x{0.7};
\def \xb{0.8}
\def \xbb{1}
\def \xbbb{1.2}
\def \y{0.4};

\definecolor{blue2}{rgb}{0.2, 0.2, 0.6}

% \draw[fill=gray!10,gray!10] (-1.8*\x,2*\y+0.1) rectangle (11.5*\x+\xbbb+0.15,-2*\y+0.05);
% \draw[fill=ForestGreen!10,ForestGreen!10] (-1.8*\x,-5*\y-0.1) rectangle (11.5*\x+\xbbb+0.15,-2*\y-0.05);

\draw[] (-1*\x-0.1,0) -- (2*\x,0);
\node at (-1*\x-0.3,0) {$A$};
\draw[] (\x,\y) -- (2*\x,\y);
\node at (\x-0.35,\y) {$\ket{0}_{\bar A}$};
\draw[] (\x,-\y) -- (2*\x,-\y);
\draw[] (\x,-3*\y) -- (4.3*\x,-3*\y);
\draw[decorate,decoration={zigzag, segment length=4, amplitude=0.9}] (\x,-\y)  -- (\x,-3*\y);
\node at (\x-0.65,-2*\y) {$\ket{\psi_{\phi}}_{A'B'}$};
\fill[black] (\x,-\y) circle (0.3mm);
\fill[black] (\x,-3*\y) circle (0.3mm);
\draw[] (-1*\x-0.1,-4*\y) -- (4.3*\x,-4*\y);
\node at (-1*\x-0.3,-4*\y) {$B$};

\draw [] (2*\x,1.5*\y) rectangle (2*\x+\xb,-1.5*\y);
\node at (2*\x+0.5*\xb,0) {$U_{\phi,1}$};
\node[meter, scale=0.42] at (2*\x+\xb+0.4,\y) {};
\draw [] (2*\x+\xb,\y) -- (2*\x+\xb+0.18,\y);

\draw [] (4.3*\x,-2.5*\y) rectangle (4.3*\x+\xb,-4.5*\y);
\node at (4.3*\x+0.5*\xb,-3.5*\y) {$U^{(x)}_{\phi,2}$};
\draw[densely dotted] (2*\x+\xb+0.63,\y) -- (4.3*\x+0.5*\xb,\y);
\draw[densely dotted] (4.3*\x+0.5*\xb,\y) -- (4.3*\x+0.5*\xb,-2.5*\y);
\node at (2*\x+\xb+0.9,\y+0.15) {\small{$x$}};
\node[meter, scale=0.42] at (4.3*\x+\xb+0.4,-3*\y) {};
\draw (4.3*\x+\xb,-3*\y) -- (4.3*\x+\xb+0.18,-3*\y);

\draw [] (7*\x,0.5*\y) rectangle (7*\x+\xbb,-1.5*\y);
\node at (7*\x+0.5*\xbb,-0.5*\y) {$U^{(x,y)}_{\phi,3}$};
\draw[] (2*\x+\xb,0) -- (7*\x,0);
\draw[] (2*\x+\xb,-\y) -- (7*\x,-\y);
\draw[densely dotted] (4.3*\x+\xb+0.63,-3*\y) -- (7*\x+0.5*\xbb,-3*\y);
\draw[densely dotted] (7*\x+0.5*\xbb,-3*\y) -- (7*\x+0.5*\xbb,-1.5*\y);
\node at (7*\x+0.5*\xbb-0.7,-3*\y+0.14) {\small{$y$}};
\node[meter, scale=0.42] at (7*\x+\xbb+0.4,-1*\y) {};
\draw (7*\x+\xbb,-1*\y) -- (7*\x+\xbb+0.18,-1*\y);

\draw [] (10*\x,-3.5*\y+0.1) rectangle (10*\x+\xbbb,-4.5*\y-0.1);
\node at (10*\x+0.5*\xbbb,-4*\y) {$U^{(x,z)}_{\phi,4}$};
\draw[] (4.3*\x+\xb,-4*\y) -- (10*\x,-4*\y);
\draw[densely dotted] (7*\x+\xbb+0.63,-1*\y) -- (10*\x+0.5*\xbbb,-1*\y);
\draw[densely dotted] (10*\x+0.5*\xbbb,-1*\y) -- (10*\x+0.5*\xbbb,-3.5*\y+0.1);
\node at (7*\x+\xbb+0.63+0.25,-1*\y+0.12) {\small{$z$}};
\draw[] (10*\x+\xbbb,-4*\y) -- (11*\x+\xbbb,-4*\y);
\draw[] (7*\x+\xbb,0) -- (11*\x+\xbbb,0);
\node at (11*\x+\xbbb+0.15,0) {$A$};
\node at (11*\x+\xbbb+0.15,-4*\y) {$B$};

% \draw [blue2,dashed,thick] (-0.75*\x-0.1,2*\y) rectangle (10.5*\x+\xbbb+0.1,-5*\y);
% \node[blue2] at (5.625*\x+0.5*\xbb,-5*\y-0.5) {$U_{\phi}$};

 \draw [blue2,dashed,thick] (-0.7,1.5*\y+0.1) -- (-0.7,-5*\y-0.1);
\node[blue2] at (-0.7,-5*\y-0.4) {$\ket{\Lambda}_{AB}$};

 \draw [blue2,dashed,thick] (4*\x+0.1,1.5*\y+0.1) -- (4*\x+0.1,-5*\y-0.1);
\node[blue2] at (4*\x-0.3,-5*\y-0.4) {$\ket{\Omega_1^{(x)}}_{AA'BB'}$};

 \draw [blue2,dashed,thick] (6.3*\x+0.1,1.5*\y+0.1) -- (6.3*\x+0.1,-5*\y-0.1);
\node[blue2] at (6.3*\x+0.4,-5*\y-0.4) {$\ket{\Omega_2^{(x,y)}}_{AA'B}$};

 \draw [blue2,dashed,thick] (9.3*\x+0.1,1.5*\y+0.1) -- (9.3*\x+0.1,-5*\y-0.1);
\node[blue2] at (9.3*\x+0.4,-5*\y-0.4) {$\ket{\Omega_3^{(x,y,z)}}_{AB}$};

 \draw [blue2,dashed,thick] (12.2*\x+0.1,1.5*\y+0.1) -- (12.2*\x+0.1,-5*\y-0.1);
\node[blue2] at (12.2*\x+0.6,-5*\y-0.4) {$\ket{\Omega_4^{(x,y,z)}}_{AB}$};

\end{tikzpicture}

%% file: Fig_results_stronger_converse.tex
\begin{tikzpicture}
\definecolor{blue2}{rgb}{0.2, 0.2, 0.6}
\begin{axis}[
  width=0.90\textwidth,height=6.0cm,
  xmin=0,xmax=1,ymin=0,ymax=1.04,
  xlabel={Normalized phase $\phi/\pi$},
  ylabel={Entanglement entropy (ebits)},
  grid=major,grid style={gray!18},
  legend cell align=left,
legend style={ draw=none, fill=white, fill opacity=0.8, text opacity=1, at={(0.97,0.03)}, anchor=south east, font=\small }, legend cell align=left, legend columns=2,
  samples=240,domain=0.00001:1
]

\addplot[name path=route,thick,black]
 table[
   col sep=comma,
   x=phi_over_pi,
   y expr={min(1,\thisrow{h2_q_plus_1_minus_q_ebits})}
 ]{improved_converse.csv};
\addlegendentry{achievability (see~\cref{eq_achievablity})}

% The D=infinity column is the formal D -> infinity limit of Theorem B.1,
% which coincides with the elementary converse h(q_phi).
% \addplot[name path=base,draw=none,forget plot]
%  table[col sep=comma,x=phi_over_pi,
%    y=D_inf_unrestricted_entropy_lower_bound_ebits]{improved_converse.csv};
%\addplot[gray!11,forget plot] fill between[of=base and route];

\addplot[thick,dotted,blue2]
 table[col sep=comma,x=phi_over_pi,
   y=D3_entropy_lower_bound_ebits]{improved_converse.csv};
\addlegendentry{converse, $D=3$}
\addplot[thick,cyan,dash dot]
 table[col sep=comma,x=phi_over_pi,
   y=D4_entropy_lower_bound_ebits]{improved_converse.csv};
\addlegendentry{converse, $D=4$}
\addplot[thick,ForestGreen,dash pattern=on 5pt off 1.5pt on 1pt off 1.5pt]
 table[col sep=comma,x=phi_over_pi,
   y=D10_entropy_lower_bound_ebits]{improved_converse.csv};
\addlegendentry{converse, $D=10$}
\addplot[thick,black,dashed]
 table[col sep=comma,x=phi_over_pi,
   y=D_inf_unrestricted_entropy_lower_bound_ebits]{improved_converse.csv};
% \addplot[name path=base,thick,black,dashed]
%  {-((1/(1+sin(deg(pi*x/2))))*ln(1/(1+sin(deg(pi*x/2))))
%  +(sin(deg(pi*x/2))/(1+sin(deg(pi*x/2))))
%  *ln(sin(deg(pi*x/2))/(1+sin(deg(pi*x/2)))))/ln(2)};
\addlegendentry{converse, $D=\infty$ }
\end{axis}
\end{tikzpicture}